\title{Highway: Efficient Consensus with Flexible Finality}
\author[1]{Daniel Kane}
\author[2]{Andreas Fackler}
\author[3]{Adam Gągol}
\author[4]{Damian Straszak}
\affil[1]{Computer Science and Engineering Department, UC San Diego}
\affil[2]{CasperLabs AG}
\affil[3,4]{Cardinal Cryptography}
\newtheorem{theorem}{Theorem}
\newtheorem{lemma}{Lemma}
\newtheorem{definition}{Definition}
\newcommand{\ww}{\mathbb{w}}
\newcommand{\eendorse}{\mathrm{ENDORSE}}
\newcommand{\vote}{\mathrm{vote}}
\newcommand{\ghost}{\mathrm{GHOST}}
\newcommand{\summit}{\mathrm{SUMMIT}}
\newcommand{\opinion}{\mathrm{opinion}}
\newcommand{\final}{\mathrm{FINAL}}
\newcommand{\leader}{\mathrm{LEADER}}
\newcommand{\gst}{\mathrm{GST}}
\newcommand{\nmin}{n_{\mathrm{min}}}
\newcommand{\nmax}{n_{\mathrm{max}}}
\newcommand{\equivs}{\mathit{Equivocators}}
\newcommand{\cntsucc}{\mathrm{cnt}_{\mathrm{succ}}}
\newcommand{\nxt}{\mathrm{next}}
\newcommand{\defeq}{\stackrel{\textup{def}}{=}}
\newcommand{\N}{\mathbb{N}}
\newcommand{\cB}{\mathcal{B}}
\newcommand{\cV}{\mathcal{V}}
\newcommand{\cL}{\mathcal{L}}
\newcommand{\cS}{\mathcal{S}}
\newcommand{\inparen}[1]{\left(#1\right)}
\newcommand{\inbraces}[1]{\left\{#1\right\}}
\newcommand{\desc}[2][0.95]
{
\vspace{5mm}
  {\centering
  \fbox{ 
    \small
    \begin{minipage}[c]{#1\linewidth}
      #2
    \end{minipage}
  }}
  \vspace{5mm}
}
\begin{document}

\maketitle

\begin{abstract}
There has been recently a lot of progress in designing efficient partially synchronous BFT consensus protocols that are meant to serve as core consensus engines for Proof of Stake blockchain systems.
While the state-of-the-art solutions attain virtually optimal performance under this theoretical model, there is still room for improvement, as several practical aspects of such systems are not captured by this model.
Most notably, during regular execution, due to financial incentives in such systems, one expects an overwhelming fraction of nodes to honestly follow the protocol rules and only few of them to be faulty, most likely due to temporary network issues.
Intuitively, the fact that almost all nodes behave honestly should result in stronger confidence in blocks finalized in such periods, however it is not the case under the classical model, where finality is binary.

We propose Highway, a new consensus protocol that is safe and live in the classical partially synchronous BFT model, while at the same time offering practical improvements over existing solutions.
Specifically, block finality in Highway is not binary but is expressed by fraction of nodes that would need to break the protocol rules in order for a block to be reverted.
During periods of honest participation finality of blocks might reach well beyond $1/3$ (as what would be the maximum for classical protocols), up to even $1$ (complete certainty).
Having finality defined this way, Highway offers flexibility with respect to the configuration of security thresholds among nodes running the protocol, allowing nodes with lower thresholds to reach finality faster than the ones requiring higher levels of confidence.

\end{abstract}

\section{Introduction}

Since the introduction of Bitcoin~\cite{nakamoto2008bitcoin} and the concept of a decentralized, tamperproof database -- a blockchain -- a number of different paradigms have been developed to design such databases.
Recently, the idea of building such systems based on PoS (Proof of Stake) has gained significant popularity.
While in the original PoW (Proof of Work, as used in Bitcoin) mechanism that is used for incentivizing participation and securing the system, the voting power of a participant is proportional to the amount of computational power possessed, in PoS the voting power is proportional to the amount of tokens (digital currency specific to this system).
A popular choice in such systems is then to periodically delegate a fixed size committee of participants which then is responsible for running the consensus on which blocks to add to the blockchain.
This way of building a blockchain has two substantial advantages over vanilla PoW systems such as Bitcoin: 1) it allows to run one of the classical permissioned consensus protocols that have been developed over the last 4 decades, 2) it allows to not only reward nodes for participation but also penalize misbehavior, by slashing security deposits of the offending committee members.

There has been recently tremendous progress in the design of permissioned consensus protocols that can be used as core engines in such PoS blockchains~\cite{AMNRY19,BKM18,BG17,CS20,GLSS19,GAGMPRSTT19,YMRGA19,zamfir2018casper}.
A vast majority of them are designed in the partially synchronous BFT model~\cite{DLS88} which asserts that communication between nodes becomes eventually synchronous and that no more than a given fraction of nodes, say $1/3$ (which is optimal in this model), are dishonest and may violate the protocol in an arbitrary way.
State-of-the-art protocols such as Hotstuff~\cite{YMRGA19}, Tendermint~\cite{BG17} and Streamlet~\cite{CS20} come close to optimality with respect to bandwith, latency of finalization and, also importantly, simplicity.
However, there are several practical properties of such blockchain systems that are not captured by this classical model, and consequently, significant room for improvement remains.
One such important aspect is that the partition of nodes into honest and Byzantine might not accurately reflect their true attitude.
In fact, according to the model, even ``honest'' nodes that have missed several protocol messages because of a DDoS attack or even a temporary network failure, are considered Byzantine.
In a situation where more than $1/3$ of nodes suffered (even for a few seconds) from such a network issue, protocols in the classical BFT model are not guaranteed to function properly.

On the other hand, besides these occasional offline periods, it is fair to assume that in a real-world system an overwhelming fraction, if not all, of the nodes honestly follow the protocol rules.
This is a consequence of the financial incentives for honest participation.
Indeed, it is in the best interest of committee members to make sure they actively participate in the consensus protocol, as they are paid a salary for honest work and are penalized for being offline or not contributing enough to the protocol progress.
In fact, because of penalties for protocol offences, it is highly unlikely that an adversary tries an attack which is not guaranteed to succeed, as otherwise it risks significant losses.
Therefore, with the only exception of large-scale, coordinated attacks that are intended to bring down the whole system, one should always expect almost all nodes to behave honestly.

Motivated by this realization there have been several works that design protocols which are safe in the classical sense while at the same time trying to offer better guarantees in ``typical'' scenarios.
In this paper we propose a new protocol -- Highway -- that contributes to this line of work.
The security of Highway is still formalized on grounds of the partially synchronous BFT model, thus in particular it achieves safety and liveness in the most demanding setting when $1/3$ of all nodes are Byzantine.
However, on top of that, Highway offers the following two features that make it particularly attractive in real-world deployments.
First of all, in periods of honest participation of a large fraction of nodes, it allows to reach finality of blocks with ``confidence'' much higher than the typical threshold of $1/3$.
To give an example, if a block reaches finality confidence of $0.8$ (which is possible in Highway) then at least $80\%$ of the nodes would need to violate the protocol in order to revert the block from the chain.
This stands in contrast with the classical notion of finalization that is binary: either a block is finalized (this means finality confidence of $1/3$) or it is not.
The second practical improvement in Highway is that it achieves flexibility akin to the notion defined in~\cite{MNR19}.
The nodes participating in Highway might be configured with different security trade-offs between the allowed number of Byzantine and crashing nodes (nodes that might go offline but are otherwise honest) in the protocol.
Flexibility then means that despite these differences in configuration, all the nodes run a single version of the protocol and perform the same actions, only the finality decisions they make depend on the chosen parameters.
A practical consequence is that nodes with lower security thresholds might reach finality much faster than nodes with higher thresholds, but as long as both these nodes' assumptions are satisfied they finalize the same blocks and stay in agreement.
    
Technically, Highway can be categorized as a DAG-based protocol~\cite{baird2016hashgraph,GLSS19,moser1999byzantine,zamfir2018casper}, in which nodes jointly maintain a common history of protocol messages, forming a directed acyclic graph representing the causality order.
In its design, Highway derives from the CBC-Casper approach~\cite{zamfir2018casper} and significantly improves upon it by the use of a new finality mechanism, message creation schedule and spam prevention mechanism.
We believe that the conceptual simplicity of DAG-based protocols along with the desirable practical features of the Highway protocol make it a solid choice for a consensus engine in a Proof of Stake-based blockchain.

\section{Our Results}

\subsection{Model}

We consider a system with a fixed set $\mathcal{V}$ of $n$ validators, each of them equipped with a public key that is known to other nodes.
This model matches the scenario of ``permissioned blockchain'', but the protocol can be applied to semi-permissionless scenario as well by rotating the set of validators.
Our model makes the following assumptions:
\begin{itemize}
    \item\emph{(Reliable point-to-point communication)}
    We assume that channels do not drop messages and all messages in the protocol are authenticated by digital signature of the sender.
    \item\emph{(Partially synchronous network)} There exists a publicly known bound $\Delta$ and an unknown Global Stabilization Time (GST) so that after GST, whenever a validator sends a message, it reaches the recipient within time $\Delta$. Additionally, we assume that validators have bounded clock drift\footnote{Note that bounded clock drift can be achieved in any partially synchronous network by means of Byzantine clock synchronization such as \cite{DLS88}}. 
    Such version of partial synchrony is known as a \emph{known $\Delta$ flavour}, for the discussion on the version without publicly known $\Delta$, see Subsection \ref{sec:dynamicrounds}.
    \item\emph{(Byzantine faults)} We assume that $f$ out of $n$ validators are under total control of an adversary, and hence can arbitrarily deviate from the protocol. We do not make a global assumption on the relation between $f$ and $n$, as safety and liveness require different bounds, and the latter have an interaction with number of crashing nodes as well.
    \item\emph{(Crashing faults)} We assume that $c$ out of $n$ nodes can become permanently unresponsive at some point in the protocol execution.
\end{itemize}

\subsection{Consensus in the Context of Blockchain}

In a typical blockchain system, validators are tasked with performing an iterative consensus on an ever-growing chain of transactions that they receive from the external environment. 
In the process, they enclose transactions into blocks forming a blockchain, in which each block refers to its predecessor by hash.
The first one, the \emph{genesis block}, is part of the protocol definition. 

In Highway, as the set of validators is either constant or subject only to very controlled changes (between different eras, see Subsection \ref{sec:eras}), specific validators are directly appointed to construct a block in a given time slot.
They do so by enclosing transactions from their local queue and hash of the block that they believe should be the predecessor.

As it may happen that a validator does not refer to the last constructed block (either intentionally, or due to a network failure), the set $\mathcal{B}$ of blocks is a tree, with a unique path leading from each block to the root: the genesis block $G$. 
The main goal of the consensus protocol in such a scenario is to choose a single branch from such a tree.
For a block $B$, we refer to all the blocks that are on the other branches as \emph{competing} with $B$, as if any of them would be chosen, $B$ could not.

\subsection{Practical Challenges}

{\bf Strong optimistic finality.} While since the initial definition of the partially synchronous model by Dwork, Lynch and Stockmeyer~\cite{DLS88} a vast body of research was created to optimize various parameters of protocols in this setting, most of it was written under the semi-formal assumption that the existence of more than $n/3$ dishonest nodes predates the existence of any provable guarantees for such protocols.
Such an assumption stems from the fact that, as proven in the original paper, it is not possible for a partially synchronous protocol to guarantee both liveness and finality if $n<3f+1$.

In spite of this negative result, it is however possible to provide additional finality guarantees in the scenario where for a prolonged period even dishonest nodes do not deviate from the protocol and actively work to finalize blocks.
Although it may not seem like an important observation from the perspective of classical security models, we stress that it carries significant practical consequences -- in most cases of blockchain deployments it may be assumed that most of the time basically all the nodes will actively collaborate to achieve consensus\footnote{As most systems reward and punish validators based on their behavior, offline nodes and small-scale attacks (ones that do not succeed in reverting a finalized block) are strongly disincentivised and hence not common.}.
It is hence possible to provide much stronger finality guarantees for blocks created during such periods, so to revert them much more than $n/3$ validators would have to collude.

{\bf Flexibility.}
Once the classical $n<3f+1$ bound is left aside, a natural tradeoff between finality and liveness occurs -- the stronger finality guarantee we require, the more validators need to honestly collaborate to finalize the block with such a guarantee.
The tradeoff could be resolved by all the validators agreeing on a common finality threshold that they intend to use, but such solution would have significant limitations.

In Highway, however, there is no need to agree upon a common threshold, so every validator is able to use a different one, or even several different thresholds.
Besides eliminating the need to make an additional consensus on this particular hyperparameter, one important implication of such feature is that it allows validators to play slightly different roles in the ecosystem -- for example some validators may deal mainly with finalizing relatively small transactions, in which case small latency is more important than very high security (and, as will become apparent after the protocol is presented, reaching higher thresholds usually takes more time), while others can prioritize safety over latency\footnote{In fact, in Highway the choice of specific threshold influences only local computations performed by a validator on the output of its communication with other validators. Hence, if validators would hand such communication logs to outside observers, observers could reinterpret the logs using different thresholds.} .

\subsection{Our Contribution}
We present Highway - a consensus protocol achieving strong optimistic finality that is flexible by allowing validators to use different confidence thresholds to convince themselves that a given block is ``finalized'' (both confidence threshold and finality will be properly defined in Subsection \ref{sec:finality}). 

Unless some validators actively deviate from the protocol, the finality of a block can only increase for a given validator, which intuitively corresponds to the ever-increasing number of confirmations for a block in PoW scenario.
However, unlike in PoW, in Highway the confidence levels for a given block can be directly interpreted as the number of validators that would need to misbehave in order to reverse such a block, what we formalize as the following theorem:

\begin{theorem}[\bf Finality]\label{thm:finality}
If an honest validator reaches finality with confidence threshold $t\geq f$ for a given valid block $B$, then no honest validator will ever reach finality with confidence threshold $t$ for a block competing with $B$.
\end{theorem}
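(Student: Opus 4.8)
The plan is to first unpack what it means for an honest validator to ``reach finality with confidence threshold $t$'' for $B$: by the definition of finality (Subsection~\ref{sec:finality}), its local view must contain a \emph{summit} certifying $B$ — a set $S$ of validators together with messages of theirs that, within the summit, all cast $\ghost$-style votes selecting $B$ (i.e.\ pointing into the subtree rooted at $B$), arranged so that no member of $S$ can be pulled off $B$ without equivocating, and so that the quantitative strength of this certificate is exactly $t$ (informally: $t$ members of $S$ would have to equivocate to undermine it).

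I would then isolate the two structural facts that carry the argument. First, an honest validator never equivocates — it emits at most one message per round, so its sequence of $\ghost$ opinions is well defined and monotone — hence an honest validator cannot be a ``conflicted'' member of two summits whose votes point into disjoint subtrees. Second, since $B$ and $B'$ compete, the subtrees rooted at $B$ and at $B'$ are disjoint, so the vote witnessing membership in a $B$-summit is incompatible with the vote witnessing membership in a $B'$-summit. Combining these: if $S$ is the validator set of a $B$-summit seen by one honest validator and $S'$ that of a $B'$-summit seen by another, then $S \cap S' \subseteq \equivs$, and since only the $f$ Byzantine validators can equivocate, $|S \cap S'| \le f \le t$.

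The proof then closes by contradiction and counting. Assuming an honest validator finalizes $B$ at threshold $t$ and an honest validator finalizes a competing $B'$ at threshold $t$, the defining inequality of a threshold-$t$ summit forces $S$ and $S'$ to be large enough, and to overlap robustly enough, that $|S \cap S'| > t$; together with $|S \cap S'| \le f \le t$ from the previous step this is a contradiction. This is precisely where the hypothesis $t \ge f$ enters, and where the quorum-intersection bound baked into the finality definition is invoked.

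The main obstacle I expect is that the two summits live in \emph{different, possibly incomparable} local DAGs — the two honest validators observe them at different times, and the theorem must hold unconditionally, including before GST — so one cannot naively intersect them as sets of messages. The fix is to argue at the level of validators and their $\ghost$ opinions rather than at the level of messages: membership of a validator $u$ in any $B$-summit, observed in any honest view, pins down $u$'s committed opinion at the relevant round in favour of $B$'s subtree, and, absent equivocation by $u$, that opinion is globally consistent and monotone, hence cannot simultaneously be the opinion certified by a $B'$-summit elsewhere. Making this globalization and monotonicity of opinions precise — and checking that it does not secretly rely on synchrony — is the delicate step; granted it, the counting argument is routine.
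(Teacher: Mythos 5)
There is a genuine gap, and it sits exactly at the step you flag as ``delicate'': your claim that a validator belonging to a $B$-summit (in one honest view) and to a $B'$-summit (in another) must be an equivocator, i.e.\ $S\cap S'\subseteq \equivs$, is not true in this protocol. An equivocation is the production of two \emph{incomparable} units by the same sender; it is not ``changing one's mind''. A non-equivocating validator's units form a chain, and its $\vote(\cdot)$ along that chain is computed by the GHOST rule from its growing downset, so it can perfectly well flip from the subtree of $B$ to the subtree of a competing $B'$ (e.g.\ after seeing a few equivocations or late units from others) without ever equivocating. Hence ``membership in a $B$-summit pins down a globally consistent, monotone opinion'' is false, and the summit definition does not ``lock'' its members to $B$ either --- it is only a nested family $(C_0,\dots,C_k)$ with unanimity, honesty, convexity and density. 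This is precisely why the finality condition is $(2q-n)(1-2^{-k})>t$ rather than $2q-n>t$: the factor $(1-2^{-k})$ quantifies how much of the reversal can be carried by honest vote changes rather than by equivocations. Your argument, if it worked, would show that a single quorum level ($k=1$) of size $q>\frac{n+t}{2}$ already gives confidence $t$, contradicting the paper's own formula, under which $k=1$ yields only confidence $(2q-n)/2$.

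The paper's route is correspondingly more involved. Its Lemma~\ref{LemmaSummit} proves, by induction on the summit height $k$, that any unit $u$ built above $C_k$ with $\vote(u)\not\geq B$ forces $|A(u)|\geq (2q-n)(1-2^{-k})$, where $A(u)$ carefully mixes equivocators visible to $u$ with vote-changers who are provably equivocators relative to $C_0\cup D(u)$ (minimality of $u$ and convexity of the $C_i$ are what turn a ``vote change'' into a provable equivocation at each level, at the cost of the geometric loss $2^{-k}$). The cross-view issue you correctly identify is then handled not by globalizing opinions but by Lemma~\ref{LemmaFinality}: one forms the artificial correct state $\sigma_1\cup\sigma_2\cup\{u_{\max}\}$ with a hypothetical top unit $u_{\max}$ above both summits; since $\vote(u_{\max})$ cannot be above both competing blocks, Lemma~\ref{LemmaSummit} applies to one of the two summits and yields $f>\min(t_1,t_2)$, from which Theorem~\ref{thm:finality} follows with $t\geq f$. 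To repair your proposal you would need to replace the quorum-intersection-plus-equivocation counting with an argument of this inductive, level-by-level kind; as written, the counting step ``$|S\cap S'|\leq f$'' does not hold.
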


Note that while due to the aforementioned impossibility result \cite{DLS88} it is not possible to prove liveness for confidence threshold $\frac{n}{3}$ and higher, in practice the vast majority of validators will not deviate from the protocol most of the times.
In such times, arbitrarily high confidence thresholds can be reached, which makes block constructed during such periods virtually impossible to revert.

Next, we provide a bound on the number of honest validators needed to guarantee that the protocol will continue finalizing blocks with given confidence threshold.
We note that it is in line with the classical $n\geq 3f+1$ bound with the added notion of ``crashing faults'', denoted by $c$, which disrupts the consensus process, but not as much as the Byzantine ones. 

\begin{theorem}[\bf Liveness]\label{thm:liveness}
For every confidence threshold $0 \leq t < \frac{n}{3}$, if $f\leq t$ and $c< \frac{n-3t}{2}$, then the chain of blocks finalized with confidence $t$ grows indefinitely for each honest validator.
\end{theorem}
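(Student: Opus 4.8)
The plan is to show that after $\gst$ the honest, never-crashing validators can repeatedly force a fresh block through the fork-choice rule and then assemble a finality witness for it at level $t$. I would first do the weight bookkeeping. Let $P$ be the set of validators that are honest and never crash, so the total weight of $P$ is at least $n-f-c$; plugging in $f\le t$ and $c<\frac{n-3t}{2}$ gives $\mathrm{weight}(P) > n - t - \frac{n-3t}{2} = \frac{n+t}{2}$. Write $p:=\mathrm{weight}(P)$, so $p>\frac{n}{2}$ and, crucially, $2p-n>t$. The first of these says $P$ always controls a strict majority of the ``live'' weight; the second is exactly the slack that lets a quorum consisting only of $P$ meet the confidence-$t$ finality (i.e.\ $t$-$\summit$) criterion, which by Theorem~\ref{thm:finality} remains consistent across honest validators.

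Second, I would establish the familiar ``eventually clean rounds'' statement: using the bound $\Delta$ and bounded clock drift, there is a time $T^\star\ge\gst$ after which all validators in $P$ traverse the round structure in near-lockstep and each of them, before acting in round $r+1$, has received every message that validators in $P$ sent in rounds $\le r$. I would also argue here that equivocation cannot keep the honest views apart: if a Byzantine validator sends conflicting messages, then within one post-$T^\star$ round every validator in $P$ sees the conflict, places that validator in $\equivs$, and discounts its weight thereafter. Hence from $T^\star$ on, all validators in $P$ agree on the effective DAG on which $\ghost$ and the $\summit$ test are evaluated, so they agree on the fork-choice output at the start of each round.

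Third --- the heart of the argument --- I would use the $\leader$ schedule. Since leadership cycles through $\cV$, infinitely many post-$T^\star$ rounds have a leader in $P$; fix such a round $r$. Its leader proposes a block $B$ extending the (now common) fork-choice head. I claim that for the next constant number of clean rounds every validator in $P$ votes for $B$ or a descendant of it: in round $r$ they do, because their views have converged and $B$ now sits on top of the agreed head carrying the leader's proposal, so $\ghost$ selects it; and in every later clean round $\ghost$ keeps selecting $B$'s subtree, since the latest messages of $P$, of total weight $p>\frac{n}{2}$, all lie in it. After enough consecutive clean rounds for the $\summit$ structure of the level required by threshold $t$ to close up over $P$ --- and $P$ alone can build arbitrarily high summits once it votes monolithically and mutually observes --- every honest validator sees a $t$-$\summit$ for $B$ and finalizes $B$ with confidence $t$. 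Iterating: after each such finalization, wait for the next clean round with a leader in $P$; its proposal strictly extends the last finalized block (proposals extend the fork-choice head, which is a descendant of everything finalized by the honest majority), so the confidence-$t$ chain strictly grows, and by Theorem~\ref{thm:finality} it remains a single chain for all honest validators.

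The step I expect to be the main obstacle is the convergence claim inside the third paragraph: proving rigorously that after an honest leader's proposal the fork choice of every validator in $P$ actually lands on the new block and then stays in its subtree long enough to complete the summit. This is where one must pin down how the message-creation schedule interleaves proposals and votes, use the post-$\gst$ synchronization precisely, and lean on both $p>\frac{n}{2}$ (so an aligned honest majority dominates $\ghost$) and the $\equivs$ mechanism (so a Byzantine minority cannot perpetually re-split the honest views). The surrounding bookkeeping --- that $2p-n>t$ makes the witness both attainable by $P$ and sufficient for threshold $t$ --- should be routine once the exact $\summit$ definition from Subsection~\ref{sec:finality} is in hand.
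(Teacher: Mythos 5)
Your overall plan coincides with the paper's: take the set $H$ of validators that are neither Byzantine nor crashing, observe $|H|>\frac{n+t}{2}$ so that $2|H|-n>t$, wait for a post-$\gst$ round whose leader lies in $H$, argue that all of $H$ votes for that leader's block and keeps doing so, and let consecutive rounds stack up an $(|H|,k)$-summit with $(2|H|-n)(1-2^{-k})>t$ for $k$ large enough; stickiness of votes above summits (Lemma~\ref{LemmaSummit}, with $f\leq t$) is what makes the new proposal extend the previously finalized block and what propagates finality to every honest validator (Lemma~\ref{lemma:final_follow}). All of that bookkeeping is right.

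The gap is exactly the step you flag, and the route you sketch for it would fail. You claim that after $T^\star$ the members of $P$ ``agree on the effective DAG'' because any equivocation is seen by all of $P$ within one round and the equivocator is then discounted, so everyone computes the same fork choice. But Byzantine validators can split honest views without ever equivocating, simply by releasing a single chain of units to different honest validators at different times; and even when an equivocation is detected, exclusion from $L(\cdot)$ does not make two honest validators' sets of latest messages coincide. So agreement on the $\ghost$ output at the start of a round does not follow, and without it you cannot even start the unanimity needed for $C_0$ of the summit. The paper never needs whole-DAG agreement: unanimity in the leader's round comes from the buffering in the Unit Creation and Reception Strategy. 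During the slots $(2R/3,R)$ of round $r_0-1$ and $(0,R/3)$ of round $r_0$, a non-leader places every received unit in a buffer \emph{except} the honest leader's proposal $u_0$, which it adopts immediately (with its downset) and answers with a confirmation unit; hence every $V\in H$ creates its confirmation unit $u_V$ with downset exactly $\sigma_0\cup\bar{D}(u_0)$, where $\sigma_0$ is the leader's state, so all these units evaluate $\ghost$ on literally the same input and vote $B$ (using $|H|>n/2$ and that round-$(r_0-1)$ latest units of $H$ vote above the last finalized block). Only after this identical-downset step does your majority argument take over, keeping later witness units voting $\geq B$ and growing the summit. To complete the proof you must replace the ``common effective DAG'' claim with an argument exploiting this buffer/round schedule (or an equivalent mechanism forcing identical confirmation-unit downsets in the honest leader's round); the rest of your outline then goes through essentially as in the paper.
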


\subsection{Related Work}

The line of work on partially synchronous protocols was initiated with the seminal work of Dwork, Lynch and Stockmeyer \cite{DLS88}, and gained popularity with the introduction of PBFT\cite{CL99} protocol and its numerous versions~\cite{BKM18,KADCW09,MNR19}. 
Classically, protocols in this model attain resilience against less than $n/3$ malicious nodes, due to a known bound stating that is it not possible to provide both safety and liveness with higher number of Byzantine faults \cite{DLS88}.
Some of the works however explore the concept of ``flexibility'' understood usually as providing the strongest $n/3$ security in the general partially synchronous model, and additional guarantees in case some additional conditions, such as network synchronicity or limited adversarial behavior, are met.

Gasper\cite{BHKPQRSWZ18}, the newly proposed candidate for Ethereum 2.0 consensus mechanism, analyzed the case of additional guarantees in case the network satisfies synchronicity assumptions. Later very similar considerations got a proper formal treatment with the introduction of the snap-and-chat family of protocols\cite{NNT20}.
The snap-and-chat protocols define two ``levels of finality'' (formalized in the paper as two ledgers, one being an extension of another). The first one, faster, relies on the classical partially synchronous assumptions and is guaranteed to be live and safe as long as less than $n/3$ nodes are faulty. 
The second level provides a stronger $n/2$ resilience against adversarial nodes, but is live only as long as the network is synchronous.
As in practical deployment assuming network synchronicity requires rather pessimistic assumptions about network latency, the second finality can be assumed to progress significantly slower. 
The provided construction is very modular and allows to use a wide variety of protocols to provide for the first and second finality levels, and the consistency between both levels is guaranteed.

Protocol that carries perhaps the most similarities with Highway when it comes to achieved security guarantees is Flexible BFT\cite{MNR19}.
It defines the new type of faulty node, the alive-but-corrupt node, that fully cooperates with honest nodes unless it is able to perform a successful attack on protocol safety.
Intuitively, it models the practical situation in which nodes without an explicit incentive do not cheat, as that would mean loosing rewards in the PoS system. 
Similarly as in Highway, the protocol is able to tolerate much more adversarial nodes if they do not aim to merely break the liveness, but are interested only in breaking safety - the bounds match the bounds in our paper. 
Flexible BFT also introduces, as the name suggests, certain flexibility for the nodes when it comes to choosing the parameters related to the finality - each of the nodes can have independent assumptions about number of faulty nodes of each kind, and it is guaranteed that two honest nodes with correct assumptions can't finalize competing blocks, and if all nodes have correct assumptions, the protocol will continue making progress.
The biggest conceptual difference is that in Flexible BFT, assumptions held by nodes explicitly influence their behavior in the protocol, while in Highway the assumed confidence threshold influences only the local computations performed on the DAG whose form is not influenced by specific choices of confidence thresholds made by validators. 
There are two main consequences of this difference. First, in Highway validators can update their confidence thresholds and they are able to recompute finality of all of the blocks without the need of communicating with other validators, while in case of Flexible BFT that would require rerunning the whole protocol. 
Second, perhaps even more importantly, Flexible BFT can stall for everyone in case some of the honest validators incorrectly assume the number of faulty parties.
In contrast, in Highway unit production never stalls, and the consensus may stall from the perspective of a given validator only if that specific validator made incorrect assumptions on the number of faulty parties.

To illustrate this difference, consider the scenario in which there is a big group of overly-conservative honest nodes incorrectly assuming that $90\%$ of the nodes are honest -- in such scenario in Flexible BFT even less conservative honest nodes will not be able to finalize blocks, while less conservative honest validators in Highway will not be influenced by such a choice of other validators, as it does not influence the communication between them in any way -- in fact, validators doesn't even have explicit means of checking confidence thresholds chosen by the others.

\section{Protocol}
We denote by $G$ the ``genesis block'' of the blockchain, which is considered part of the protocol definition.
Except the genesis block, every other block $B$ consists of a reference to its parent, denoted $\mathrm{prev}(B)$, and the content of the block  -- typically a list of transactions.
The parent reference in $B$ is realized by including the hash of $\mathrm{prev}(B)$ in $B$ and thus there cannot be any cycles in the block graph.
We also denote by $\mathrm{next}(B)$ the set of all blocks for which $B$ is the parent.
We recursively define the height $H$ of a block: the height $H(G)$ of the genesis block is $0$ and $H(B)=1+H(\mathrm{prev}(B))$ for any other block $B$.
We say that a block $B_1$ is a descendant of $B_2$ and write $B_2\leq B_1$ in case when one can reach $B_1$ from $B_2$ by following parent links (in particular $H(B_2)\leq H(B_1)$.

\subsection{Building a DAG}

In the Highway protocol, validators exchange messages in order to reach consensus on proposed blocks and hence validate one of possibly many branches of the produced blockchain.
As a way of capturing and spreading the different validators' knowledge about the already existing messages, it adopts the DAG framework~\cite{baird2016hashgraph,GLSS19,moser1999byzantine,zamfir2018casper}, in which every message broadcast by a validator refers a certain set of  messages sent by validators before.
We will refer to such messages broadcast during normal protocol operation as \emph{units}, and the included references as \emph{citations}. More formally, each unit consists of the following data
\begin{itemize}
    \item {\bf Sender.} The ID of the unit's sender (creator).
    \item {\bf Citations.} A list of hashes of other units that the creator wants to attest to. 
    \item {\bf Block.} In case a unit is produced by the validator appointed to produce a block at a given time, it is included in the unit. 
\end{itemize}

All units to be considered correct must also have a digital signature by its sender.
We denote the sender (creator) of a unit $u$ by $S(u)$.
As a given unit can only refer to previously constructed units, the citations contained in units can be seen as edges in a DAG (directed acyclic graph).

In the protocol it will be often important not only whether a given unit $u$ cites directly some other unit $u'$, but whether there is a provable casual dependence between them, i.e., whether it can be proven that during the creation of $u$ its creator was aware of $u'$.
Such a notion is easily captured by the existence of the chain of citations connecting $u$ and $u'$.
In the presence of such a chain, we say that the unit $u'$ is a  \emph{justification} of unit $u$ or, in other words, that $u'$ justifies $u$.
As the justification relation is transitive, we interpret it as a partial order on the set of units and denote the fact that $u'$ justifies $u$ by $u'\leq u$.
We denote the set of all units strictly smaller than $u$ in this order, i.e., its \emph{downset}, by $D(u)$, and the set $D(u)\cup\{u\}$ as $\bar{D}(u)$.
We also naturally extend the downset notation to sets of units, i.e., for a set $\cS$ of units we define
$$D(\cS) \defeq \bigcup_{u\in \cS} D(u)~~~~~~~\mbox{ and }~~~~~~~~\bar{D}(\cS) \defeq \bigcup_{u\in \cS} \bar{D}(u).$$
To formalize the notion of a protocol view in the DAG framework, we introduce the following definition:

\begin{definition} {\bf (Protocol State.)}
A finite set of units $\sigma$ is a \emph{protocol state} if it is closed under $D$, i.e., if $D(u)\subseteq \sigma$ for every $u\in\sigma$.
\end{definition}

When creating a new unit $u$, a validator $V$ is expected to always cite the last unit it created before $u$, and hence include all its previous units in $D(u)$.
As a consequence, the units created by honest validators always form a chain.
Note that malicious nodes can still deviate from this rule, and hence the following definition.

\begin{definition}{\bf (Equivocation.)}
A pair of units $(u,u')$ is an \emph{equivocation} if $S(u)=S(u')$ and $u$ and $u'$ are incomparable by the relation $\leq$. In such a case, the sender $S(u)$ is called an \emph{equivocator}.
\end{definition}

For a given set of units $\cS$ we denote the set of proven equivocators below it as $$E(\cS) \defeq \{V\in \mathcal{V} \mid \text{There exists } u,u'\in D(\cS) \text{ created by } V \text{ s.t. } u\not\leq u' \text{and } u'\not\leq u\}$$

We also denote the set of latest messages under a unit $u$ produced by honest (so far) validators as 
$$L(u) \defeq \{v\in D(u) \mid \cS(v)\notin E(\{u\}) \text{ and } v'>v \!\Rightarrow\! \cS(v')\neq \cS(v) \text{ for every } v'\in D(u)\}.$$

\subsection{Voting via the GHOST Rule}
In this section we introduce the GHOST (Greedy Heaviest Observed Sub-Tree) rule for fork selection in blockchains and explain how one can concretely implement it using an idea called virtual voting in the DAG.

\paragraph{The GHOST rule.}
An important task run by every blockchain client is that of fork selection.
Given a set of blocks $\cB$ that do not necessarily form a single chain, but are typically a tree of blocks, the goal is to pick a single ``tip'' (i.e., a leaf in this tree) to be considered as the head of the blockchain.
In systems based on Proof of Work, such as Bitcoin or Ethereum, the Longest Chain Rule is most commonly employed for that purpose, i.e., the leaf block of maximum depth is chosen as the head.

In our setting, we would like the fork selection rule to somehow express the common belief of the committee of validators on which chain is the ``main one''.
This however is not possible in the absence of additional information on what the validators ``think''.
Suppose therefore that besides the tree of blocks $\cB$ we are given opinions of validators, represented by a mapping 
$$\opinion: \cV \to \cB,$$
meaning that from the perspective of validator $V\in \cV$ the block $\opinion(V)$ should be the head of the blockchain. 
Having the block tree and opinions we are ready to define the GHOST rule.

\desc{
\begin{center}
    {\bf GHOST rule for block tree $\cB$ and $\opinion$ function} 
\end{center}
\begin{enumerate}
\item For each $B\in \cB$ compute $\mathrm{total}(B)$ to be the total number of validators $V\in \cV$ such that $\opinion(V) \geq B$ (i.e., such that $\opinion(V)$ is a descendant of $B$).
\item Set $B$ to the genesis block. Repeat the following steps while $B$ is not a leaf in $\cB$:
\begin{enumerate}
    \item Choose $B'\in \nxt(B)$ with largest $\mathrm{total}(B')$ (break ties by hash of $B'$).
    \item Set $B:=B'$.
\end{enumerate}
\item Output $B$.
\end{enumerate}
}

For brevity we write $\ghost(\cB, \opinion)$ to be the block resulting from applying the GHOST rule to the tree of blocks $\cB$ using the $\opinion:\cV \to \cB$.

\paragraph{Virtual Voting using the DAG.}

The GHOST rule is quite natural, intuitive and -- as we will soon demonstrate -- has a number of desirable properties.
However, for a node to apply it, it needs the tree of blocks and the ``most recent'' votes by each of the validators. 
The question becomes: how should the nodes maintain their local views on the block tree and what should they consider as the current opinions of other nodes? 
Note that even minor inconsistencies between two nodes on what the current opinions are might cause the GHOST rule to output different heads of the blockchain.
What is even worse is that such inconsistencies might be generated purposely by dishonest nodes, by sending different opinions to different validators, or sending opinions selectively.

In Highway, the consistency between local views of nodes is achieved with the help of the DAG.
Roughly speaking: each unit $u$ carries a virtual GHOST vote, depending only on $D(u)$ and possibly the block included in $u$ (if there is one). This vote is determined automatically from the virtual GHOST votes in the ``latest messages'' under $u$.
For brevity, below we denote by $L_V(u)$ the unique unit $v\in L(u)$ created by $V$ in case it exists, otherwise $L_V(u)=\bot$.
To formally define what a unit $u$ considers as $\vote(u)$, first we define an opinion function $\opinion_u: \cV \to \cB_u$ that is ``local to $u$''. Here $\cB_u$ consists of all blocks that appear in $\bar{D}(u)$, so it may happen that unit $u$ votes for a block that it carries.:

$$
    \opinion_u(V) \defeq \begin{cases}
    \vote\inparen{L_V(u)} &\mbox{if }  L_V(u)\neq \bot\\
    G &\mbox{otherwise}.
    \end{cases}
$$
Subsequently, we define $\vote(u)$ as
$$\vote(u) \defeq \ghost(\cB_u, \opinion_u).$$

While it does not affect correctness of the protocol, it is best for efficiency if honest validators always make sure that the block $B$ proposed in their unit $u$ satisfies $\vote(u)=B$.
This is achieved by choosing the parent of $B$ to be $\ghost(\cB_u \setminus \{B\}, \opinion_u)$ and can be also made a necessary condition for correctness of $u$.

One can interpret $\vote(u)$ as the block which the creator of $u$ considers as the head of the blockchain, at the moment when $u$ was created.
This does not quite mean that the creator of $u$ is certain that $\vote(u)$ will be ever finalized.
Instead, each validator maintains for every block a confidence parameter that indicates how likely it is that a given block will be ``reverted'', i.e., will not end up as part of the blockchain.
As we explain in the next section, this confidence parameter is proportional to how many validators would need to equivocate their units, in order to revert a given block.

\subsection{Finality Condition}\label{sec:finality}

Having defined the DAG and voting mechanism, we are ready to introduce the rules of finalizing blocks in the Highway protocol.
By taking advantage of the DAG framework, validators are able to compute finality of each block performing only local operations, namely searching for specific structures in their local copy of the DAG.

\begin{definition}
A $(q,k)$-summit for the block $B$, relative to protocol state $\sigma$ is a nested sequence of sets of units $(C_0, C_1, C_2, \dots, C_k)$ such that $C_0\supseteq C_1 \supseteq \ldots \supseteq C_k$ and:

\begin{itemize}
  \item {\bf unanimity.} $\vote(u) \geq B$ for all $u\in C_0$,
  \item {\bf honesty.} $E(\sigma) \cap S(C_0) = \emptyset$,
  \item {\bf convexity.} $u_0, u_2 \in C_i$ implies $u_1 \in C_i$, for all $u_0  \leq u_1 \leq u_2$ s.t. $S(u_1) = S(u_2) = S(u_3)$,
  \item {\bf density.} $\big|S\big(\bar{D}(u) \cap C'_i)\big)\big| \geq q$ for all $u \in C_{i + 1}$.
\end{itemize}
Where $C'_i = \big\{u\in C_i \mid \text{ there exists } u'\in C_{i+1} \text{ s.t. } S(u) = S(u') \big\}$.
\end{definition}

Intuitively, summits represent uninterrupted streaks of units produced by a big subset of validators (i.e., a quorum) that vote for the same block $B$ --- a structure in DAG after which it is very unlikely that the votes will change, hence summits will be used to finalize blocks.
In particular, as we will prove in the series of lemmas, checking the following condition will suffice to convince the validator that the given block $B$ will not be retracted, unless more than $t$ validators equivocate:

$$
   \final(B, \sigma, t) \defeq \begin{cases}
    1 &\mbox{if } \text{there exists $(q,k)$-summit for $B$ in $\sigma$ s.t. }(2q-n)(1-2^{-k})>t\\
    0 &\mbox{otherwise}.
    \end{cases}
$$

Now, based on the above definition we will refer to validator $V$ as having block $B$ \emph{finalized for confidence threshold} $t$ if $t$ is an integer value s.t. $\final(B,\sigma,t) = 1$, where $\sigma$ is $V$'s protocol state.
Next we prove the key technical lemma, showing that after a summit of certain height and quorum size occurs, new units will not vote against the vote of the summit.

\begin{lemma}\label{LemmaSummit}
Let $C = (C_0,\dots,C_k)$ be a $(q,k)$-summit for block $B$, relative to protocol state $\bar{D}(C_0)$, $\sigma\supseteq\bar{D}(C_0)$ be any correct protocol state, and let $u$ be a unit in $\sigma$ such that $D(u)\cap C_k\neq\emptyset$ and $\vote(u)\not\geq B$.
Then $f\geq (2q-n)(1-2^{-k})$.
\end{lemma}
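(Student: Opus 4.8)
The plan is to argue by induction on the height $k$ of the summit, tracking how many validators must equivocate in order for a "dissenting" unit $u$ (one with $\vote(u) \not\geq B$) to appear above level $C_k$. For the base case $k=0$, a $(q,0)$-summit is just a set $C_0$ of units all voting for $B$, sent by non-equivocators (by the honesty condition). If $u$ cites some unit of $C_0$ and yet $\vote(u)\not\geq B$, then by the GHOST rule applied at $u$ the combined weight of validators whose latest-message-opinion under $u$ is $\geq B$ must be strictly less than $n/2$ along the relevant branch; but the $q$ senders of $C_0$ each either still have an opinion $\geq B$ at $u$ (contributing to the $\geq B$ side) or have been overwritten by a later unit of theirs — and since they are honest in $\sigma\cap\bar D(C_0)$, being "overwritten in a way that changes the vote" forces them to be seen as equivocators from $u$'s viewpoint, or else their own later unit already fails to vote for $B$, pushing the dissent down. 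Chasing this bookkeeping yields $f \geq 2q-n$, which is exactly $(2q-n)(1-2^{-0})$... wait, $1-2^0 = 0$; so actually the base case is $k=1$ giving $(2q-n)(1-2^{-1}) = (2q-n)/2$, and $k=0$ is vacuous — I would restructure to make $k=1$ the base.

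The inductive step is where the doubling $1-2^{-k}$ comes from. Suppose the claim holds for summits of height $k-1$. Given the height-$k$ summit $(C_0,\dots,C_k)$ and a dissenting unit $u$ citing $C_k$, I would split the $q$ senders guaranteed by the density condition at level $C_k$ (relative to $C_{k-1}'$) into those whose relevant level-$(k-1)$ units are cited by $u$ with opinion still $\geq B$, and those who are not. For the first group, I would like to invoke the induction hypothesis on the truncated summit $(C_0,\dots,C_{k-1})$ together with a suitably chosen unit derived from $u$, concluding that a $(2q-n)(1-2^{-(k-1)})$-sized equivocator set already exists "below" $u$; for the second group, the GHOST weight argument at $u$ forces roughly another $\tfrac12(2q-n)$ validators to have equivocated or switched, and the convexity condition is what lets me conclude these are genuine new equivocations rather than double-counting the ones from the first group. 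Adding the two contributions gives $(2q-n)(1-2^{-(k-1)}) + \tfrac12(2q-n)\cdot 2^{-(k-1)}$... I need to check the arithmetic carefully so it telescopes to $(2q-n)(1-2^{-k})$; the exact split of the $q$ validators into "already counted" versus "newly forced to equivocate" must be arranged so the geometric series closes, and I expect the convexity condition to be precisely the hypothesis that prevents overlap.

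The main obstacle I anticipate is the careful combinatorial accounting in the inductive step: one must show that the equivocators "discovered" by applying the induction hypothesis to the shorter summit and the equivocators "discovered" by the fresh GHOST argument at $u$ are disjoint (or at least that their union has size at least $(2q-n)(1-2^{-k})$), and this is exactly what the convexity and density conditions in the summit definition are designed to guarantee. A secondary subtlety is that the induction hypothesis is stated relative to the protocol state $\bar D(C_0)$, so when I apply it I must produce an honest unit in (a subset of) $\sigma$ that cites $C_{k-1}$ and dissents — most likely obtained by walking down from $u$ along citations to the first unit that still dissents and cites $C_{k-1}$, and verifying it inherits the $\vote(\cdot)\not\geq B$ property. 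I would also need the GHOST rule's monotonicity property (presumably one of the "desirable properties" promised after the GHOST definition) to control how $\vote$ can change between a unit and units citing it. Once disjointness is established, the final inequality $f \geq (2q-n)(1-2^{-k})$ follows by summing, and Theorem~\ref{thm:finality} then drops out by contraposition: if two honest validators finalized competing blocks at threshold $t$, pick the lower summit, extend to a common protocol state, find the dissenting unit, and derive $f > t$, a contradiction with $t \geq f$.
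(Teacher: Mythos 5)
Your skeleton does match the paper's proof: induction on the summit height, a GHOST counting argument at the dissenting unit $u$, recursion through a minimal dissenting unit $u'$ whose downset meets $C_{k-1}$, and a disjointness argument so that the two contributions add. But there is a genuine gap at the heart of the inductive step. The statement you induct on only bounds the global quantity $f$, whereas your step needs to \emph{add} a fresh level-$k$ contribution to whatever the induction hypothesis delivered; for that you must know \emph{which} validators the hypothesis counted, i.e.\ you must strengthen the lemma to a set-valued claim. The paper proves that the set $A(u) = E(u)\cup \big(S(\{v\in L(u) \mid \vote(v) \not\geq B\}) \cap E(C_0 \cup D(u))\big)$ has size at least $(2q-n)(1-2^{-k})$; the second component --- senders of latest messages under $u$ that dissent and that are equivocators relative to $C_0\cup D(u)$, even though possibly not within $D(u)$ alone --- is exactly the device you need and never formulate. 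Without it your ``first group / second group'' split cannot be shown disjoint, and the GHOST inequality at $u$ does not give ``roughly another $\tfrac12(2q-n)$ new validators'': what it gives is $|\mathcal{V}^{\mathrm{eq}}_{k-1}| + 2|\mathcal{V}^{\mathrm{change}}_{k-1}| \geq 2q-n$, where the visible equivocators $\mathcal{V}^{\mathrm{eq}}_{k-1}\subseteq E(u)$ may already be entirely counted by the inherited set. The paper closes the arithmetic not with a ``half new'' split but with the averaging bound $|A(u')\cup\mathcal{V}^{\mathrm{eq}}_{k-1}| + |\mathcal{V}^{\mathrm{change}}_{k-1}| \geq \tfrac12 |A(u')| + \tfrac12\big(|\mathcal{V}^{\mathrm{eq}}_{k-1}| + 2|\mathcal{V}^{\mathrm{change}}_{k-1}|\big)$, which is what makes the geometric series telescope.

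A second, related misstep is that you assign the no-double-counting job to convexity. In the paper the disjointness of the inherited set $A(u')$ from the fresh vote-changers, and indeed the very fact that vote-changers lie in $E(C_0\cup D(u))$, both come from \emph{minimality}: $u$ is chosen minimal among dissenting units above $C_k$, and $u'$ minimal among dissenting units $\leq u$ above $C_{k-1}$; a validator counted twice would expose, via its witness unit in $C'_{k-1}$, a dissenting unit strictly below $u'$ that already sees $C_{k-1}$, contradicting the choice of $u'$, and a vote-changer whose dissenting latest message sat above its $C_k$ unit would contradict the choice of $u$. Convexity plays only a supporting role (each validator's summit units form an interval of its chain, ruling out a dissenting unit sandwiched between summit units). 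Finally, a minor point: your base-case worry is unnecessary --- $k=0$ is the legitimate (trivially true) base since the bound is $0$, so no separate $k=1$ argument is needed. In short, the route is right, but the strengthened invariant $A(u)$, the minimal-choice disjointness argument, and the averaging step constitute the actual content of the proof and are missing from the proposal.
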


\begin{proof}

Let us define 

$$A(u) = E(u)\cup \Big(S\big(\{v\in L(u) \mid \vote(v) \not\geq B\}\big) \cap E\big(C_0 \cup D(u)\big)\Big)$$

We will prove a slightly stronger thesis, namely that if the conditions of the Lemma are met, then $|A(u)| \geq (2q-n)(1-2^{-k})$.

We proceed with an induction on $k$.
Note that the $k = 0$ case follows trivially, as $(2q-n)(1-2^{-0}) = 0$, hence we assume that the statement holds for $k-1$.

Let us assume that there is a unit $u$ satisfying the assumptions of the Lemma, i.e., such that $D(u)\cap C_k\neq\emptyset$ and $\vote(u)\not\geq B$.
Let us take a minimal such $u$.

Let $u_k\in D(u)\cap C_k$ and $\mathcal{V}_{k-1} = S\big(\{v\in C'_{k-1}\mid v< u \}\big)$.
As $u$ is above some unit in $C_k$, we clearly have $| \mathcal{V}_{k-1}|\geq q$.
If all of the validators from $\mathcal{V}_{k-1}$ would keep voting for $B$ or for blocks above it, so would $u$, hence some of them must have equivocated or changed vote.
Let us additionally define $\mathcal{V}^{\mathrm{eq}}_{k-1} =\mathcal{V}_{k-1}\cap  E(u)$ and $\mathcal{V}^{\mathrm{change}}_{k-1} = \mathcal{V}_{k-1}\cap S\Big(\{v\in L(u)\mid \vote(v) \not\geq B\}\Big)$.
As $u$ is above at least $q-|\mathcal{V}^{\mathrm{eq}}_{k-1}|-|\mathcal{V}^{\mathrm{change}}_{k-1}|$ votes for $B$ and still votes against it, we have\footnote{Note that the total number of votes counted by $u$ is $n-|E(u)|$}:

\begin{equation*}
q-|\mathcal{V}^{\mathrm{eq}}_{k-1}|-|\mathcal{V}^{\mathrm{change}}_{k-1}| \leq \frac{n-|E(u)|}{2} \leq \frac{n-|\mathcal{V}^{\mathrm{eq}}_{k-1}|}{2}
\end{equation*}

\begin{equation}\label{EquationSummit1}
2q-n \leq |\mathcal{V}^{\mathrm{eq}}_{k-1}| +2|\mathcal{V}^{\mathrm{change}}_{k-1}|
\end{equation}

\noindent If $\mathcal{V}^{\mathrm{change}}_{k-1}$ is empty, we have the thesis already, so let us assume that $\mathcal{V}^{\mathrm{change}}_{k-1}\neq\emptyset$.
Let then $u'$ be a minimal unit such that $u'\leq u$, $\vote(u')\not\geq B$ and $D(u')\cap C_{k-1}\neq\emptyset$, which is bound to exist by noneptiness of $\mathcal{V}^{\mathrm{change}}_{k-1}$.
From the inductive assumption we get 

\begin{equation}\label{EquationSummit2}
    |A(u')| \geq (2q-n)(1-2^{-k+1})
\end{equation}

Next, connecting Inequalities \ref{EquationSummit1} and \ref{EquationSummit2}, we get

\begin{flalign*}
|A(u')\cup\mathcal{V}^{\mathrm{eq}}_{k-1} |+|\mathcal{V}^{\mathrm{change}}_{k-1}|\geq \\
\frac{|A(u')|}{2}+\frac{|\mathcal{V}^{\mathrm{eq}}_{k-1}|+2|\mathcal{V}^{\mathrm{change}}_{k-1}|}{2}\geq \\
\frac{(2q-n)(1-2^{-k+1})}{2}+\frac{2q-n}{2} = (2q-n)(1-2^{-k})
\end{flalign*}

\bigskip

Note that each $V\in \mathcal{V}^{\mathrm{change}}_{k-1}$ had to produce a unit in $C_k$, but also a unit witnessing voting for a block competing with $B$. 
If the latter unit would be above the unit in $C_k$, that would contradict the minimality of $u$, hence $V$ needs to belong to $E\big(C_0\cup D(u)) \big)$ and, in consequence, to $A(u)$.
As we also have $A(u')\subseteq A(u)$ and $\mathcal{V}^{\mathrm{eq}}_{k-1}\cap \mathcal{V}^{\mathrm{change}}_{k-1}=\emptyset$, the remaining thing to show is that the sets $A(u')$ and $\mathcal{V}^\mathrm{{change}}_{k-1}$ are disjoint.

Suppose that there exists $V\in A(u')\cap \mathcal{V}^{\mathrm{change}}_{k-1}$.
As $\mathcal{V}^{\mathrm{change}}_{k-1}\cap E(u)=\emptyset$, it has to be $V\in E\big(C_0\cup D(u')\big)$ and the last $V$'s unit that is strictly below $u'$ must vote for some block $B'$ competing with $B$.
Let $v$ be this unit voting for $B'$ and $v'\in C'_{k-1}$ - witness of $V$ being in $\mathcal{V}_{k-1}$.
As $V$ is not seen as an equivocator by $u'$, we have $v'\leq v$, hence we have $\vote(v)\not\geq B$, $D(v)\cap C_{k-1}\neq\emptyset$ and $v\leq u$ while $v<u'$, contradicting the minimality of $u'$.

\bigskip

\end{proof}

\begin{lemma}[Finality]\label{LemmaFinality}
Let $V_1$ and $V_2$ be honest validators with their respective protocol states $\sigma_1, \sigma_2$, and let $B_1, B_2$ be two competing blocks.
Then, if $\ \final(B_1,\sigma_1,t_1) = 1$ and $\final(B_2,\sigma_2,t_2) = 1$, it has to be $f> \min(t_1,t_2)$.
\end{lemma}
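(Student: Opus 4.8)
The plan is to deduce the claim from Lemma~\ref{LemmaSummit} by exhibiting, inside a single correct protocol state, one unit that sits above the top level of \emph{both} summits and is forced to vote against at least one of $B_1,B_2$; running Lemma~\ref{LemmaSummit} on whichever summit that unit contradicts then pushes $f$ above the corresponding threshold, and hence above $\min(t_1,t_2)$.

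First I would unpack the hypotheses: $\final(B_1,\sigma_1,t_1)=1$ yields a $(q_1,k_1)$-summit $(C_0,\dots,C_{k_1})$ for $B_1$ in $\sigma_1$ with $(2q_1-n)(1-2^{-k_1})>t_1$, and $\final(B_2,\sigma_2,t_2)=1$ yields a $(q_2,k_2)$-summit $(\hat C_0,\dots,\hat C_{k_2})$ for $B_2$ in $\sigma_2$ with $(2q_2-n)(1-2^{-k_2})>t_2$ (here $t_i\geq 0$ forces these to be genuine summits, in particular $C_{k_1},\hat C_{k_2}\neq\emptyset$). Next I would record a monotonicity fact: of the four conditions defining a summit only \emph{honesty} mentions the ambient state, and $\bar{D}(C_0)\subseteq\sigma_1$ gives $E(\bar{D}(C_0))\subseteq E(\sigma_1)$ (and likewise for $B_2$), so the first sequence is also a $(q_1,k_1)$-summit for $B_1$ relative to $\bar{D}(C_0)$ and the second a $(q_2,k_2)$-summit for $B_2$ relative to $\bar{D}(\hat C_0)$ --- which is exactly the shape of input Lemma~\ref{LemmaSummit} expects.

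Then I would construct the witnessing unit. Since $V_1,V_2$ are honest they disseminate their entire states; by non-dropping channels, partial synchrony, and the fact that honest validators keep producing units after $\gst$, some honest validator eventually creates a unit $w$ with $D(w)\supseteq\sigma_1\cup\sigma_2$. Hence $\bar{D}(w)$ is a correct protocol state containing $\bar{D}(C_0)$ and $\bar{D}(\hat C_0)$, with $D(w)\cap C_{k_1}\neq\emptyset$ and $D(w)\cap\hat C_{k_2}\neq\emptyset$. Because $B_1$ and $B_2$ compete, they are incomparable in the block tree, so $\vote(w)$ cannot be a descendant of both; therefore $\vote(w)\not\geq B_1$ or $\vote(w)\not\geq B_2$. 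In the former case, Lemma~\ref{LemmaSummit} applied to the $B_1$-summit, the state $\bar{D}(w)$ and the unit $w$ gives $f\geq(2q_1-n)(1-2^{-k_1})>t_1$; in the latter case it gives $f\geq(2q_2-n)(1-2^{-k_2})>t_2$. Either way $f>\min(t_1,t_2)$, which is the claim.

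I expect the only delicate step to be the construction of $w$: this is not a statement about a fixed, static DAG but relies squarely on the model assumptions (existence of $\gst$, reliable channels, and scheduled, never-stalling unit production) to guarantee that the two summits --- observed separately in $\sigma_1$ and $\sigma_2$ --- are eventually jointly dominated from above by a single honest unit. Everything else is routine: the monotonicity of the honesty condition, the fact that competing blocks admit no common descendant, and the two direct appeals to Lemma~\ref{LemmaSummit}.
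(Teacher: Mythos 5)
Your overall route is the same as the paper's: produce one unit that dominates the top levels of both summits, observe that since $B_1$ and $B_2$ compete its vote must fail to be above one of them, and apply Lemma~\ref{LemmaSummit} twice. The monotonicity remark (a summit relative to $\sigma_i$ is also a summit relative to $\bar{D}(C_0)$, since only the honesty condition refers to the ambient state) also matches the paper. The gap is exactly in the step you yourself flag as delicate: the construction of $w$. You argue that, by reliable channels, partial synchrony and continued unit production after $\gst$, some honest validator \emph{eventually} creates a real unit $w$ with $D(w)\supseteq\sigma_1\cup\sigma_2$. This is not guaranteed by the model, and a safety statement must not depend on it. Concretely: the model allows $c$ crashing faults, so $V_1$ may become permanently unresponsive right after reaching $\sigma_1$, and units in $\sigma_1$ that a Byzantine sender delivered only to $V_1$ then never enter anyone else's DAG; nothing in the protocol forces any future unit to have them in its downset. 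Moreover, once the spam-prevention rules are in place, honest validators keep LNC-violating units in a buffer and build only above what is in their DAG, so even delivered units need not ever appear below an honest unit. Finally, conditioning the conclusion on events after $\gst$ is conceptually wrong for a finality lemma, which must hold at the moment the two validators hold the states $\sigma_1,\sigma_2$, regardless of what happens later.

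The repair is the paper's one observation that your write-up is missing: Lemma~\ref{LemmaSummit} is a purely combinatorial statement about any \emph{correct} protocol state, not only states reachable in an execution. So one simply adjoins a hypothetical unit $u_{max}$, attributed to an honest node, whose downset is all of $\sigma_1\cup\sigma_2$ (this union is downward closed, hence $\sigma=\sigma_1\cup\sigma_2\cup\{u_{max}\}$ is again a correct state), and runs your case analysis on $\vote(u_{max})$. No dissemination, synchrony, or liveness argument is needed, and the conclusion $f>\min(t_1,t_2)$ follows exactly as in your final paragraph. With that substitution your proof coincides with the paper's.
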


\begin{proof}

Let $\sigma = \sigma_1\cup\sigma_2 \cup \{u_{max}\}$, where $u_{max}$ is an additional unit created by one of the honest nodes which has full $\sigma_1\cup\sigma_2$ as its downset.
Note that $\sigma$ is an artificial state that might not necessarily come up in a real execution of the protocol, however it is still a correct state, thus Lemma~\ref{LemmaSummit} applies to it.
By definition of $\final$, we get that there is a $(q_1,k_1)$-summit for $B_1$ relative to $\sigma_1$ and a $(q_2,k_2)$-summit for $B_2$ relative to $\sigma_2$.
By definition we have: 

\begin{equation}\label{EquationFinality1}
(2q_1-n)(1-2^{-k_1})>t_1
\end{equation}

\vspace*{-0.9cm}

\begin{equation}\label{EquationFinality2}
(2q_2-n)(1-2^{-k_2})>t_2
\end{equation}

Note that even though $\sigma$ might not be reachable in protocol execution, Lemma \ref{LemmaSummit} holds for any correct state, as is $\sigma$.
Additionally, by the definition of the summit, any summit $(C_0,\dots,C_k)$ respective to some protocol state $\sigma'$ is also a summit respective to protocol state $\bar{D}(C_0)$.
As $u_{max}$ is above both of the summits, from Lemma \ref{LemmaSummit} we get that if $\vote(u_{max})\not\geq B_1$ then $f\geq (2q_1-n)(1-2^{-k_1})>t_1$, and if  $\vote(u_{max})\not\geq B_2$ then $f\geq (2q_2-n)(1-2^{-k_2})>t_2$.
As $B_1$ and $B_2$ are competing, $\vote(u_{max})$ can't be above both of them, hence we get $f>\min(t_1,t_2)$.

\end{proof}

Finally, we prove Theorem \ref{thm:finality} as a simple corollary:

\begin{proof}
Assume for contradiction that two validators have reached finality for competing blocks $B$ and $B'$ with confidence threshold $t\geq f$.
But then, by Lemma \ref{LemmaFinality}, we have that $f>\min(t,t)=t$, what finishes the proof.
\end{proof}

\subsection{Computability of the Finality Condition}
In this section we show that for a given state $\sigma$ and a quorum parameter $q$ a straightforward greedy algorithm can be used to find the highest possible $(q,\cdot)$-summit relative $\sigma$. 
We start with the pseudocode of this algorithm.

\desc{
\begin{center}
    {\bf Algorithm $\summit(\sigma, B, q)$ } 
\end{center}
\begin{enumerate}
\item Let $S_0\subseteq \cV$ be the set of all validators who have not equivocated in $\sigma$ and whose latest messages vote for $B$.
\item Let $C_0$ be the set of units created by $S_0$, built as follows: for each validator $V\in S_0$
\begin{enumerate}
    \item let $u$ be the latest unit created by $V$ in $\sigma$
    \item while $u$ votes for $B$:
    \begin{itemize}
        \item add $u$ to $C_0$
        \item replace $u$ by the direct parent of $u$ created by $V$ in $u$'s downset
    \end{itemize}
\end{enumerate}

\item For $l=1,2,3, \ldots$ repeat:
\begin{enumerate}
    \item Set $S_l:=S_{l-1}$
    \item Repeat:
    \begin{enumerate}
        \item Set $W:=S_l$
        \item Let $C_W:=\{u\in C_{l-1}: S(u) \in W\}$.
        \item For each $V\in W$:
        \begin{itemize}
            \item If there is no $u\in C_{l-1}$ by $V$ with $|D(u)\cap C_W|\geq q$, remove $V$ from $S_l$.
        \end{itemize}
        \item If $W=S_l$, break the loop.
    \end{enumerate}
    \item If $S_l=\emptyset$ return $\{C_0, C_1, \ldots, C_{l-1}\}$.
    \item Let $C_{l-1}':=\{u\in C_{l-1}: S(u) \in S_l\}$.
    \item Let $C_l:=\{u\in C_{l-1}: |D(u)\cap C_{l-1}'| \geq q, S(u) \in S_l\}$
\end{enumerate}
\end{enumerate}
}

The following lemma shows that the $\summit$ algorithm always outputs valid $(q,\cdot)$-summits and moreover that the returned summits are maximal.

\begin{lemma}[Properties of SUMMIT]
Let $(C_0, C_1, \ldots, C_r)$ be the  output of $\summit(\sigma, B, q)$ then
\begin{enumerate}
    \item {\bf (Correctness.)} If $r\geq 1$ then $(C_0, C_1, \ldots, C_r)$ is a $(q,r)$-summit for block $B$ relative to $\sigma$.
    \item {\bf (Maximality.)}  For every $(q,k)$-summit $(D_0, D_1, \ldots, D_k)$ for block $B$, relative to $\sigma$, we have $r\geq k$ and $D_i\subseteq C_i$ for each $i=0,1, \ldots, k$.
\end{enumerate}
\end{lemma}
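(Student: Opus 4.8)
The plan is to prove both halves by induction on the level index, maintaining the invariant that $S(C_l)=S_l$, i.e. the senders occurring in $C_l$ are exactly the validators left in $S_l$. This invariant is forced by the exit test of the inner ``Repeat'' loop of Step~3: that loop stops precisely when every $V$ still in $S_l$ owns a unit $u\in C_{l-1}$ with $|D(u)\cap C_W|\ge q$ for $W=S_l$, and such a $u$ is then placed in $C_l$; conversely every member of $C_l$ has its sender in $S_l$ by the definition of $C_l$. I would also record one elementary structural fact up front: if a validator $V$ has not equivocated in $\sigma$ then its units in $\sigma$ form a single $\le$-chain, and since the test ``$|D(u)\cap C'_{l-1}|\ge q$'' that cuts $C_l$ out of $C_{l-1}$ is monotone along such a chain (a later unit has a larger downset), the units of $V$ lying in each $C_l$ form a contiguous top segment of that chain; the same holds for $C_0$ relative to $V$'s maximal top segment of $B$-voting units, directly from Step~2.

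\textbf{Correctness.} Assuming $r\ge1$, I verify the four defining clauses for the output $(C_0,\dots,C_r)$. Nestedness is immediate because each $C_l$ is carved out of $C_{l-1}$. Unanimity holds because Step~2 inserts into $C_0$ only units that vote for $B$, and $C_r\subseteq C_0$. Honesty holds because Step~1 takes $S_0$ disjoint from $E(\sigma)$ and $S(C_0)=S_0$. Convexity follows from the ``contiguous top segment'' fact: if $u_0\le u_1\le u_2$ have a common sender and $u_0,u_2\in C_l$, then $u_1$ lies on the same chain between them, hence in the segment defining $C_l$. Density: for $u\in C_l$ with $l\ge1$ the definition of $C_l$ gives $|D(u)\cap C'_{l-1}|\ge q$, and since $S(C_l)=S_l$ the algorithm's $C'_{l-1}=\{v\in C_{l-1}:S(v)\in S_l\}$ is exactly the set $C'_{l-1}$ appearing in the summit definition; reconciling this with the clause $|S(\bar D(u)\cap C'_{l-1})|\ge q$ uses that $u\in C'_{l-1}$ together with the chain structure of honest validators' units.

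\textbf{Maximality.} Let $(D_0,\dots,D_k)$ be any $(q,k)$-summit for $B$ relative to $\sigma$; I show $D_i\subseteq C_i$ for all $i\le k$ by induction on $i$, which gives $r\ge k$ since a nonempty $D_k$ then forces $C_k\ne\emptyset$, so the algorithm cannot have terminated before level $k$. Base case $i=0$: for $u\in D_0$ with sender $V$, unanimity gives $\vote(u)\ge B$ and honesty gives $V\notin E(\sigma)$, and then $u$ lies in the $B$-voting top segment of $V$'s chain that Step~2 builds, so $u\in C_0$. Inductive step: assume $D_i\subseteq C_i$, hence $S(D_i)\subseteq S(C_i)=S_i$. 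First claim $S(D_{i+1})\subseteq S_{i+1}$: follow the inner ``Repeat'' loop for level $i+1$ and suppose toward a contradiction that at some iteration it would remove a $V\in S(D_{i+1})$ — take the first such iteration, so the working set $W$ at that point still satisfies $S(D_{i+1})\subseteq W$, whence $C_W\supseteq D'_i$, where $D'_i=\{v\in D_i:S(v)\in S(D_{i+1})\}$ is the summit's set. Picking $u\in D_{i+1}$ with $S(u)=V$, the density clause $|S(\bar D(u)\cap D'_i)|\ge q$ yields $|D(u)\cap C_W|\ge q$, so $V$ is not removed — contradiction. Thus $S(D_{i+1})\subseteq S_{i+1}$, so $D'_i$ is contained in the algorithm's $C'_i=\{v\in C_i:S(v)\in S_{i+1}\}$. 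Now each $u\in D_{i+1}$ satisfies $u\in D_i\subseteq C_i$, $S(u)\in S_{i+1}$, and $|D(u)\cap C'_i|\ge|D(u)\cap D'_i|\ge q$ by density, so $u\in C_{i+1}$, completing the induction.

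The step I expect to be the real work is the inductive step of maximality, namely showing that the greedy fixpoint of the inner ``Repeat'' loop never drops below $S(D_{i+1})$: this is exactly where the summit's density condition must be invoked at precisely the right moment, while the working set still dominates $S(D_{i+1})$. A few bookkeeping points I would dispatch by a short remark rather than a computation: reconciling $\bar D$ versus $D$ and ``count units'' versus ``count distinct senders'' between the algorithm's test and the summit's density clause (using $u\in C'_{l-1}$ and the chain structure of each honest validator's units); checking in the base case that a valid bottom level of a summit really lands inside the greedily built $C_0$; and assuming without loss of generality that $D_k\ne\emptyset$, since otherwise appending empty levels inflates $k$ and makes the claim $r\ge k$ vacuous.
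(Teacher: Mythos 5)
Your proof is correct and follows essentially the same route as the paper's: correctness by direct verification of the summit clauses (with the invariant $S(C_l)=S_l$ that the paper uses implicitly), and maximality by induction on the level, showing that the inner fixpoint loop can never remove a validator of $S(D_{l})$ because, at the first such removal, the working set would still dominate $S(D_{l})$ and the summit's density clause would contradict the removal test. The only point you flag but defer --- reconciling the algorithm's unit-count over $D(u)$ with the definition's sender-count over $\bar{D}(u)$ --- is glossed over silently in the paper's own proof as well, so your treatment is, if anything, slightly more explicit about it.
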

\begin{proof}
\noindent {\bf Correctness.} It is evident by steps 1. and 2. of the algorithm that $C_0$ satisfies convexity and that all the $C_i$'s, for $i=0,2, \ldots, r$, as subsets of $C_0$ satisfy honesty and unanimity. Density for $C_i$ for $i>0$ is guaranteed in step 3.(e) of the algorithm, this way of constructing $C_i$ also implies convexity. 

What remains to prove is that all $C_i$'s are non-empty.
We proceed inductively: $C_0$ is non-empty because $r\geq 1$. Suppose now that $C_{i-1}\neq \emptyset$ for $1<i\leq r$, we show that $C_{i}\neq \emptyset$. Note that since $r\geq i$ means that $S_i\neq \emptyset$ in step 3.(c) of the algorithm. Moreover, the loop 3.(b) guarantees that after reaching 3.(c), each validator $V\in S_k$ has created at least one unit $u$ that satisfies $|D(u)\cap C'_{i-1}|\geq q$.
It follows that $C_i$ is non-empty.

\noindent {\bf Optimality.} We proceed by induction and show that $D_i \subseteq C_i$ and $S(D_i) \subseteq S(C_i)$ for each $i=0,1, \ldots, k$. Because of the convexity requirement for summits and the steps 1. and 2. of the algorithm, the base case $i=0$ holds.

Suppose now that $D_i \subseteq C_i$ and $S(D_i) \subseteq S(C_i)$ holds for $i=0,1,2, l-1$ for some $l\leq k$. We show that $D_l \subseteq C_l$ and $S(D_l) \subseteq S(C_l)$.
To this end, we first show the following invariant: the set $W$ that appears in the loop 3.(b) of the algorithm during iteration $l$ satisfies $S(D_l)\subseteq W$.
In  the first iteration we have $W=S_{l-1}$, and the invariant holds, since $S(D_l)\subseteq S(D_{l-1})$ and by induction $S(D_{l-1})\subseteq S(C_{l-1})=S_{l-1}$.
Now, whenever a validator $V\in W$ is removed from $W$ in step iii. of loop 3.(b), we have that no unit $u$ by $V$ in $D_{l-1}$ satisfies $|D(u) \cap C_{k-1}|\geq q$. Since $S(D_{l-1}) \subseteq W$, this means that also no $V$'s unit satisfies $|D(u) \cap D_{l-1}'|\geq q$ and in particular $V\notin S(D_l)$. 

In other words, whenever we remove a validator $V$ from the set $W$, this validator cannot belong to $S(D_l)$, thus still $S(D_l) \subseteq W \setminus \{V\}$.
Consequently, the invariant holds, which in turn implies that $S(D_l) \subseteq S_l =S(C_l)$.
The inclusion $D_l \subseteq C_l$ is a simple consequence of $S(D_l) \subseteq S(C_l)$ and the way $C_l$ is constructed in step 3.(e) of the algorithm.
\end{proof}

A straightforward consequence is that the finality criterion $\final(B, \sigma, t)$ is efficiently computable as follows

\desc{
\begin{center}
    {\bf Computing $\final(\sigma, B, t)$ } 
\end{center}
\begin{enumerate}
\item For $q=\lceil \frac{n+t}{2} \rceil,\lceil \frac{n+t}{2} \rceil+1, \ldots , n$ do:
\begin{enumerate}
    \item Let $k$ be the height of the summit output by $\summit(\sigma, B, q)$.
    \item If $(2q-n)(1-2^{-k})>t$ then {\bf return }$1$.
\end{enumerate}
\item {\bf return} $0$.
\end{enumerate}
}

\subsection{Guaranteeing Liveness}\label{sec:unit_creation}
So far we have discussed structural properties of the DAG produced during protocol execution and showed that certain combinatorial structures -- summits -- when found within the DAG can be used to conclude finality of a block.
What however is still missing is a proof that such combinatorial structures will actually appear in the DAG and that the adversary cannot indefinitely prevent any progress from happening.
Towards such a proof we need to first define a strategy for how and when validators should produce new units and when to include new blocks in them.

\noindent {\bf Unit Creation Schedule.}
Recall that we work in the partially synchronous model of communication in which a value $\Delta>0$ is known such that after an unknown moment in time called GST, each message from an honest validator takes less than $\Delta$ time to reach its recipient.
We make the standard assumption that validators have perfectly synchronized clocks (as each bounded difference in their local times can be instead treated as delay in message delivery).

We divide the protocol execution into rounds $r=0, 1, 2, \ldots$ of length $R:=3\Delta$. Each validator keeps track of rounds locally, based on its clock. 
We assume that there exists a leader schedule $\leader : \N \to \cV$ which assigns to a round index $r\in \N$ a round leader $\leader(r)\in \cV$.
Round leaders are responsible for creating new blocks, hence, as one can imagine, it is important to make sure that honest leaders appear in this schedule as often as possible.
In theory we only need to assume that honest validators appear in this schedule infinitely many times.
In practice one can, for instance, use a round-robin schedule or a pseudorandomly generated schedule.

We are now ready to introduce the strategy of how an honest validator should behave when it comes to creating new units

\desc{

\begin{center}
    {\bf Unit Creation and Reception Strategy by $V\in \cV$ in round $r\in \N$ } 
\end{center}
\tcc{Time is measured from the start of round $r$.}
\tcc{Whenever a new unit is created by $V$, it's above all maximal units in the local DAG, and it is added to the DAG right away.}
\begin{enumerate}
\item At time $t=0$: if $V=\leader(r)$
\begin{enumerate}
    \item move all units from the {\it buffer} to the local DAG.
    \item create and broadcast a new unit $u_L$ (referred to as the {\bf proposal unit}). Include in $u$ a new block $B$, choosing its parent so that $\vote(u_L)=B$.
\end{enumerate}
\item In the time slot $(0, R/3)$, if $V\neq \leader(r)$: if a new unit $u_L$ created by $\leader(r)$ is received, add it (along with its downset) to the local DAG right away. Immediately after adding $u_L$, create and broadcast a new unit (referred to as the {\bf confirmation unit} by $V$ in this round). All the remaining units received in this slot are placed in the {\it buffer}.
\item At time $R/3$ move units from the {\it buffer} to the local DAG.
\item In the time slot $(R/3, 2R/3)$: whenever a new unit is received, add it to the local DAG right away.
\item At time $t=2R/3$: create and broadcast a new unit (referred to as the {\bf witness unit} by $V$ in this round).
\item In the time slot $(2R/3, R)$: all the units received in this slot are placed in the {\it buffer}.
\end{enumerate}
}

Note that it might happen that a validator skips the creation of the confirmation unit (in case it did not receive a unit from the leader before time $R/3$).
The witness unit, on the other hand, is always created.
Thus in every round, the round leader creates $2$ units: the proposal unit, and the witness unit, and a non-leader creates either $1$ (the witness unit) or $2$ units (the confirmation and the witness unit).

\begin{lemma}\label{lemma:final_follow}
Let $t\geq f$ be an integer confidence parameter and assume that the number of crashing nodes is $c< \frac{n-3t}{2}$. If an honest validator reaches a state $\sigma$ such that $\final(\sigma, B, t)$ for some block $B$, then after some point in time, for every state $\sigma'$ reached by any honest validator it holds $\final(\sigma', B, t)$.
\end{lemma}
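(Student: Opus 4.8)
The plan is to show that once some honest validator sees a summit witnessing $\final(\sigma,B,t)$, that same summit persists in every future honest view, and moreover no honest validator will ever vote against $B$ — so the summit is never ``broken'' and the finality certificate, being monotone, is eventually observed by everyone. First I would fix the $(q,k)$-summit $(C_0,\dots,C_k)$ in $\sigma$ with $(2q-n)(1-2^{-k})>t$. Since $t\ge f$, this gives $(2q-n)(1-2^{-k})>f$, so by Lemma~\ref{LemmaSummit} (applied to any correct superstate $\sigma'\supseteq\bar D(C_0)$, which every future honest state is, up to adding $\bar D(C_0)$ which is forced once any unit above $C_k$ is cited) \emph{no} unit $u$ in any such correct state with $D(u)\cap C_k\ne\emptyset$ can have $\vote(u)\not\ge B$. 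The key point is then that $\bar D(C_0)$ is finite, so after GST it propagates to every honest validator within bounded time (reliable channels, $\Delta$-bounded delivery, and the unit-creation schedule guarantee each honest validator eventually cites it); once an honest validator's state contains $\bar D(C_0)$, the very same tuple $(C_0,\dots,C_k)$ is a $(q,k)$-summit relative to that state (summit membership depends only on $\vote$, $E$, $S$ restricted to $\bar D(C_0)$, plus $E(\sigma')\cap S(C_0)=\emptyset$, which holds because $S(C_0)$ are validators that never equivocate — if one of them equivocated, Lemma~\ref{LemmaSummit}'s argument via the change/equivocation bound would again force $f\ge(2q-n)(1-2^{-k})>t\ge f$, a contradiction). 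Hence $\final(\sigma',B,t)=1$ for that state and, by monotonicity of the finality predicate under state growth, for all later states too.

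The second ingredient I would spell out is why $\bar D(C_0)$ actually reaches all honest validators. After GST every honest validator broadcasts a witness unit each round, and these units cite all maximal units in the sender's local DAG; by a standard ``gossip'' argument (each unit in $\bar D(C_0)$ is known to at least the honest validator holding $\sigma$, hence broadcast, hence delivered within $\Delta$, hence cited in the next round's units of every honest recipient) the downset $\bar D(C_0)$ is contained in every honest validator's state within $O(R)$ time after the later of GST and the moment $\sigma$ was reached. Here is where the hypothesis $c<\frac{n-3t}{2}$ matters only indirectly: it is not needed for \emph{persistence} of an already-formed summit, but it is what Theorem~\ref{thm:liveness}/the liveness analysis uses to guarantee summits keep forming; for this lemma the crashing bound is harmless because crashed nodes in $S(C_0)$ would simply stop producing units, which does not retroactively invalidate the finite summit already present in $\bar D(C_0)$. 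I would state this carefully so the reader sees the lemma is really a \emph{monotonicity-plus-propagation} statement.

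The main obstacle I anticipate is the bookkeeping around ``correct protocol state'' in Lemma~\ref{LemmaSummit}: that lemma is stated for $\sigma\supseteq\bar D(C_0)$, but a future honest state $\sigma'$ need not contain $\bar D(C_0)$ until the propagation argument kicks in, and in the interim $\sigma'$ could in principle contain a unit $u$ voting against $B$ without containing the summit. The fix is to argue in the union state $\sigma'\cup\bar D(C_0)\cup\{u_{\max}\}$ — exactly the trick used in the proof of Lemma~\ref{LemmaFinality} — which is correct and does contain the summit, so Lemma~\ref{LemmaSummit} applies there and bounds $f$; this rules out the bad unit $u$ globally, not just after propagation. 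So the real content is: (i) the artificial-union trick to invoke Lemma~\ref{LemmaSummit} at all times, giving ``no honest unit ever votes against $B$ once the summit exists''; (ii) finite propagation of $\bar D(C_0)$ after GST; (iii) observing $(C_0,\dots,C_k)$ is still a valid summit in the enlarged honest state because none of $S(C_0)$ equivocates (again by Lemma~\ref{LemmaSummit} applied to a would-be equivocation); (iv) monotonicity of $\final$. I'd present these four steps in that order.
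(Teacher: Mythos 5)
There is a genuine gap, and it sits at the heart of your plan: steps (iii) and (iv) do not hold. The predicate $\final(B,\cdot,t)$ is \emph{not} monotone under state growth, because the \textbf{honesty} condition of a summit, $E(\sigma')\cap S(C_0)=\emptyset$, refers to the ambient state. A Byzantine validator belonging to $S(C_0)$ (there may be up to $f\leq t$ of them) can equivocate \emph{after} the summit was observed, without ever producing a unit above $C_k$ that votes against $B$; Lemma~\ref{LemmaSummit} is silent about such behaviour, since it only yields $f\geq(2q-n)(1-2^{-k})$ from a unit $u$ with $D(u)\cap C_k\neq\emptyset$ and $\vote(u)\not\geq B$, not from a mere equivocation. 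So your claim that ``$S(C_0)$ are validators that never equivocate, else Lemma~\ref{LemmaSummit} gives a contradiction'' is unfounded, and once such an equivocation becomes visible in a later honest state $\sigma'$, the original tuple $(C_0,\dots,C_k)$ is no longer a $(q,k)$-summit relative to $\sigma'$ (and trimming the equivocators shrinks the quorum, possibly below the threshold $t$). Hence ``the same summit persists and propagates'' cannot be the proof; persistence of the \emph{certificate} is exactly what fails.

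The paper's proof is different in kind: it builds \emph{fresh} summits rather than transporting the old one. Let $H$ be the validators that are neither Byzantine nor crashing; the hypotheses $f\leq t$ and $c<\frac{n-3t}{2}$ give $|H|=n-f-c\geq\frac{n+t+1}{2}$ --- so the crash bound is essential to this lemma, not ``only indirect'' as you assert. After GST every $V\in H$ eventually holds $\sigma$, and from then on Lemma~\ref{LemmaSummit} (with $t\geq f$) forces every unit they create to vote $\geq B$; their witness units in consecutive rounds, and everything above them, form the sets $C_i$ of a $(|H|,k)$-summit of arbitrary height $k$ in every honest validator's view (honesty holds because members of $H$ never equivocate). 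Taking $k>\log(t+1)$ gives $(2|H|-n)(1-2^{-k})\geq(t+1)(1-2^{-k})>t$, so every honest state eventually satisfies $\final(\cdot,B,t)$. The parts of your plan that do survive are the union-state trick to apply Lemma~\ref{LemmaSummit} and the propagation of the finite set $\bar{D}(C_0)$ after GST, but they must feed into this reconstruction argument, not into a (false) monotonicity claim.
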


\begin{proof}
Denote by $H$ the set of all validators that are neither Byzantine nor crashing. As $n$ and $t$ are integer by definition, we have $c\leq \frac{n-3t-1}{2}$ and thus

\begin{equation}\label{eq:size_H}
 |H|=n-f-c\geq n-t-\inparen{\frac{n-3t-1}{2}} =  \frac{n+t+1}{2}.
\end{equation}

Note first that after $\gst$ each validator $V\in H$ eventually reaches a state that fully contains $\sigma$.
Let $r_0$ be the first round when this happens, from Lemma~\ref{LemmaSummit} it follows that each unit $u$ created by any of these validators from this point on satisfies $\vote(u) \geq B$.

Let us denote by $u^V_r$ the witness unit (i.e. the one created at time $2R/3$) created in round $r$ by the validator $V\in H$.
For $r\geq r_0+1$ we have the following properties satisfied by any $u\in \sigma$ created by $V$ with $u\geq u_{r}^V$
\begin{enumerate}
    \item $u$ contains in its downset all units $u^W_{r-1}$ for $W\in H$,
    \item $\vote(u) \geq B$.
\end{enumerate}
The former property follows from the fact that $r_0$ happened after $\gst$ and the latter is a consequence of Lemma~\ref{LemmaSummit}: indeed finality was reached because of the existence of a particular summit -- each unit $u$ built beyond this summit must then vote for a block  $\geq B$, thus $\vote(u)\geq B$.
It is now easy to see that the family of units $\{u_r^V\}$ can be used to build arbitrarily high summits.
More specifically, given a state $\sigma$ define the following sets:
$$C_i:= \inbraces{u \in \sigma: S(u)\in H, u\geq u_{r_0+i}^{S(u)}} ~~~~~~~\mbox{for }i\geq 0,$$
thus the set $C_i$ is the union of chains of units that begin with $u_{r_0+i}^V$ for each validator $V\in H$.
We claim that whenever 
\begin{equation}\label{eq:high_sigma}
    |S(C_k)|=|H|,
\end{equation} then $(C_0, C_1, \ldots, C_k)$ is a $(|H|, k)$-summit relative to $\sigma$.
This follows straight from the two properties listed above.
Moreover, since all the validators in $H$ are honest, each validator in $H$ will eventually reach a state $\sigma$ for which~\eqref{eq:high_sigma} holds.

It remains to note that, since by~\eqref{eq:size_H}, $|H|\geq  \frac{n+t+1}{2}$, for $k>\log(t+1)$ we have
$$(2|H|-n)(1-2^{-k}) \geq (t+1)(1-2^{-k})>t.$$

Consequently, $\final(B, \sigma', t)$ holds for each honest validator's state $\sigma'$ after $\log(n)$ rounds.

\end{proof}

Proof of Theorem \ref{thm:liveness}:

\begin{proof}
Note that by Lemma~\ref{lemma:final_follow} whenever even a single validator finalizes a block then all the others, eventually, will follow in finalizing exactly the same block.
Suppose thus, for the sake of contradiction that there exists a run of the protocol such that for some height $h\geq 0$, some block $B_h$ is finalized, but no block is ever finalized (by any honest validator) at height $h+1$.

Denote by $H$ the set of all validators that are neither Byzantine nor crashing. As in the proof of Lemma~\ref{lemma:final_follow} we have that $|H|\geq  \frac{n+t+1}{2}.$
Let $r_0$ be a round index such that:
\begin{enumerate}
    \item round $r_0-1$ happened after $\gst$,
    \item each validator $V\in H$ has already finalized $B_h$ before the beginning of round $r_0-1$, and consequently voted for a block above $B_h$ in all its units in the previous round,
    \item The leader $V_L:=\leader(r_0)$ is a member of $H$.
\end{enumerate}
The existence of such a round $r_0$ follows from Lemma~\ref{LemmaSummit} (each unit that is built upon the summit finalizing $B_h$ votes for a block $\geq B_h$) and the fact that validators in $H$ appear infinitely often in the schedule.
We show that the block $B$ proposed by the leader $V_L$ in round $r_0$ will get eventually finalized (thus reaching contradiction with the fact that height $h+1$ is never reached).

Towards this end, let us denote by $\sigma_0$ the state of $V_L$ at the start of round $r_0$, by $u_0$ the proposal unit created by $V_L$ at the start of round $r_0$, and by $B$ the block that $V_L$ proposes in $u_0$.
Since $V_L$ chooses as $B$'s parent the GHOST choice, we have that $\vote(u)=B$.
Moreover, $B_h \leq B$ because all latest units by validators in $H$ (which were created in round $(r_0-1)$) vote for blocks above $B_h$.

Consider now any validator $V\in H$ and let $u_V$ be the first of the units created by $V$ during round $r_0$ (i.e., proposal unit in case $V$ is the leader of that round, otherwise the confirmation unit of $V$, created before time $R/3$).
We claim that $$D(u_V) = \sigma_0 \cup \bar{D}(u_0).$$
This follows from the fact that $\gst$ has passed and from the unit creation and reception strategy.
Indeed: between time $2R/3$ in round $r_0-1$ when the previous unit $\bar{u}_V$ by $V$ was created, and the moment $<R/3$ in round $r_0$ when $V$ received $u_0$ ($R/3=\Delta$ hence $V$ is guaranteed to receive $u_0$ before time $R/3$), $V$ did not include in its DAG any other units than $u_0$ and its downset.
Moreover, $V_L$ received $\bar{u}_V$ before creating $u_0$ and hence $\bar{u}_V \in D(u_0)$, and thus the claim follows.

From the claim, the GHOST rule, and the fact that $|H|>n/2$, we have that $\vote(u_V)=B$.
Further, if $u_V'$ is the witness unit by $V$ in this round  (created at time $2R/3$) we also have $\vote(u_V')\geq B$, because\footnote{The reason why we claim that $\vote(u_V') \geq B$ and not the stronger version $\vote(u_V')=B$ is because a dishonest validator could have proposed another block on top of $B$ in the meantime.} $u_W\in L(u_V')$ for all $W\in H$ (because GST already happened, we know that each such $u_W$ is received by $V$ before $u_V'$ is created).
By the same reasoning, each witness unit $u''_V$ created by an honest validator in round $(r_0+1)$ (i.e., created at time $2R/3$) has all units $\{u'_W\}_{W\in H}$ in its downset and $\vote(u''_V)\geq B$.
By repeating this argument (see an analogous argument in the proof of Lemma~\ref{lemma:final_follow}) one concludes that for every $k>0$, by the start of round $r_0+k$ each honest validator reaches a state $\sigma'$ which contains a $(|H|, k)$-summit relative to $\sigma'$. 

Furthermore, by a calculation as in Lemma~\ref{lemma:final_follow}, we have that for $k>\log(t+1)$ such summits already finalize block $B$.
Consequently, $\final(B, \sigma', t)$ holds for each honest validator's state $\sigma'$ after round $r_0+k$.
Since this gives us a finalized block of height $\geq h+1$, we reach a contradiction with the assumption that no block of height $h+1$ is ever finalized.
\end{proof}

\subsection{Communication Complexity}
The so far described protocol, in the absence of Byzantine nodes is rather efficient when it comes to communication complexity.
Indeed, each validator produces and broadcasts $2$ units per round on average, and downloads the $\approx 2n$ units created by other validators in the same round.
The situation becomes much more tricky if some validators are Byzantine and try to make the communication complexity large, to possibly exhaust resources of honest validators, and cause crashes. 
One possibility for that would be to send a large number of invalid or malformed messages to honest validators -- while such attacks can be dangerous, they can be dealt with by simply discarding all such incorrect messages that are being received.
Another possibility is to send messages, i.e., units that are not produced according to the protocol rules, but still look correct and thus cannot be discarded by honest nodes.
As it turns out, the latter type of attack needs to be addressed already in the protocol, and not in the implementation, as otherwise the communication complexity of the protocol would be essentially infinite.
One can distinguish two basic {\it spam patterns} of this type that the dishonest validators can try:
\begin{itemize}
    \item {\bf Vertical Spam.} A dishonest validator does not equivocate but produces new units at a fast pace, violating the Unit Creation and Reception Strategy. This type of spam is easy to deal with, as one can compute exactly how many units a validator should have produced by a given moment in time. And if we are receiving more, then we can simply ignore them. 
    \item {\bf Horizontal Spam.} A dishonest validator or a group of dishonest validators produce a large number of equivocations and send them in parallel to different honest validators to make the DAG large. These would by definition be equivocations for which the attacker might be punished (likely by slashing its stake), but they could potentially be disruptive nonetheless.
\end{itemize}
The latter type of attack is by far more dangerous and harder to deal with.
In the most basic form of this attack, a single malicious validator sends a very large number of equivocating votes to the other validators.
If the other validators would include all of these votes in their local DAGs, this would soon exhaust all memory available on their machines.
We note that this kind of direct attack is fairly easy to avoid simply by having validators refuse to directly cite equivocating votes in their downsets (we will discuss a slightly more sophisticated version of this idea shortly).
This ensures that when there is {\bf only one attacker}, then each honest validator directly cites at most one chain of messages by the attacker.
This means that a single equivocating attacker can produce $\Omega(n)$ copies of its units (one per honest validator).
This is a reasonable bound, as such behavior is severely punished, and thus is expected to happen rarely if at all.

This however is not the end of the story, as the above defense does not quite work if {\bf multiple attackers} act together.
Suppose that $A_1,A_2,B_1,B_2,C_1,C_2,\ldots \in \cV$ are the attacking validators who work together in order to create the following collection of messages. 
\begin{enumerate}
    \item $A_1$ creates units $m_1$ and $A_2$ creates a unit $m_2$,
    \item $m_1$ directly cites units $m_{11}$ and $m_{12}$ by $B_1$ and $B_2$ respectively,
    \item $m_2$ directly cites equivocating units $m_{21}$ and $m_{22}$, again by $B_1$ and $B_2$,
    \item unit $m_{ij}$ is above $m_{ij1}$ and $m_{ij2}$ by $C_1$ and $C_2$.
\end{enumerate}
Continuing this pattern, a conspiracy of $2k$ validators can create a pattern of $2^k$ votes over the course of $k$ rounds so that no vote directly cites an equivocation.
An honest validator trying to validate the top units in this pattern will need to download exponentially many equivocating units at the bottom.
What is perhaps worse here is that although it will be easy for the honest validators receiving these units to tell that something is going wrong (given the extraordinary number of units and equivocations), it is non-trivial to determine exactly who is responsible and where to start cutting off the bad units.
For example, in the pattern described, $A_1$ and $A_2$ actually have not equivocated and so cannot be distinguished from honest validators.
This means that a validator $V$ receiving this pattern will be left with the choice of either ignoring these messages (which means that if $A_1$ and $A_2$ were honest that these validators are being permanently cut off from each other), or forwarding them (which can be a problem if a version of this with different equivocating messages was sent to the other honest validators).
The above attack, referred to as {\it Fork Bomb} has been described in~\cite{GLSS19}.

To deal with such an attack, we will need to be extra suspicious of units that cite other units that equivocate with ones that we already know of.
In particular, we will want to know that these votes are not themselves equivocations.
We resolve this by introducing a system of endorsements.

\subsubsection{Spam Prevention Using Endorsements}\label{sec:endorse}
We note that one equivocating validator $A\in \cV$ might send equivocating units to each of the other $n-1$ validators and if they do not coordinate about which units to include in their next downset, then all these may end up part of the DAG, and thus each honest validator will be forced to include $\Omega(n)$ chains of units from $A$ in order to incorporate each others' units.
Thus, as long as we do not want to coordinate between validators the inclusion of every single unit in the DAG, then we must accept that $\Omega(n)$ chains per validator (in case it equivocates) is the best we can hope for.

We propose a strategy that requires no coordination in the optimistic case -- when no equivocation is present in the DAG, and only adds some overhead in the pessimistic case -- when equivocations are there.
At the same time it achieves the best possible guarantee: no local DAG of an honest validator ever contains more than $\Omega(n)$ chains of units per equivocator.
To this end we introduce a new type of messages in the protocol: {\it Endorsements}.

An endorsement $\eendorse(V, u)$ is a message by a validator $V\in \cV$ (digitally signed by $V$) regarding a unit $u$.
Intuitively the meaning of this message is:
\begin{center}
    $\eendorse(V, u) ~~\equiv ~~$ {\it at the moment of creating this endorsement, $V$ was not aware of any equivocations by $S(u)$}
\end{center}
We say that a unit $u$ is endorsed if $>n/2$ validators $V$ sent endorsements for $u$.
Note that being endorsed is a subjective property as it depends on the local view of a validator.

\desc{

\begin{center}
    {\bf Endorsement strategy for $V\in \cV$ } 
\end{center}
\begin{enumerate}
\item $\cV$ starts the protocol in the {\it relaxed} mode.
\item After seeing any equivocation, $V$ switches to the {\it cautious} mode\footnote{There is no explicit exit condition for the cautious mode. However, as we specify in Section~\ref{sec:eras}, the protocol execution is split into eras, and at the start of each era, every validator is initialized in relaxed mode.}.
\item When entering {\it cautious mode}, for all units $u$ that $V$ is aware of, if $V$ is not aware of any equivocation by $S(u)$, $V$ broadcasts $\eendorse(V,u)$.
\item When being in {\it cautious mode}, whenever $V$ learns for the first time about a unit $u$ and is not aware of any equivocation by $S(u)$, $V$ broadcasts $\eendorse(V,u)$.
\item In any mode: whenever $>n/2$ endorsements $\eendorse(W,u)$ for a unit $u$, from pairwise different validators $W$ have arrived, set the status of $u$ as {\it endorsed}.
\end{enumerate}
}

We emphasize that the above strategy for sending endorsements is rather naive, as it sends endorsements for all ``honest'' units.
This version allows for a simple analysis, yet might be slightly inefficient in practice.
In Section~\ref{sec:practical} we show a refined strategy to achieve the same goal but sending less endorsements.

We note that an honest validator will never endorse a pair of equivocating units as by the time they receive the second of these units, they will know the sender to be an equivocator and thus will not endorse.
In fact, exhibiting endorsements of equivocating messages by a single validator is verifiable proof of bad behavior by that validator for which they can be penalized.
Furthermore, assuming as always that there are $f<\frac{n}{3}$ dishonest validators we can guarantee that  the maximum number of incomparable units by the same validator that are simultaneously endorsed is $3$. 
We will often think of endorsed units as being certified by a significant fraction of honest validators.
While the downsets of other units might be considered suspect (in that they might contain equivocation bombs), endorsed units are considered to be more solid.
\begin{definition}
A unit $u$ is said to cite another unit $v$ \emph{naively}, denoted $u >_n v$, if $u > v$ and there is no endorsed unit $w$ such that $u > w \geq v$.
\end{definition}

Using the notion of naive citation, we introduce a validity criterion for units that allows us to avoid equivocation bombs.

\desc{
\begin{center}
    {\bf Limited Naivety Criterion (LNC) } 
\end{center}
A unit $u$ by validator $V$ is considered incorrect if there is an equivocation $(v_1, v_2)$ (created by some validator $W$) and two units $u_1, u_2\leq u$ created by $V$ such that 
$u_1 >_n v_1$  and $u_2 >_n v_2.$
}

Since, as previously noted, endorsed units, and thus also naive citations are subjective notions, also satisfaction of LNC by a given unit might depend on the local view of a validator.
Importantly, the status of a unit for a given validator can change only one way - if a given unit satisfies LNC, it will never cease to satisfy it, if a unit is endorsed, it will never cease being endorsed, and if a unit is not cited naively by another unit $v$, it will never be considered to be cited naively by $v$.
Thus it is important to suitably modify the unit reception strategy so that units upon which a given validator cannot yet build its own unit without breaking LNC are kept in a buffer. 
Once such a unit gathers enough endorsements, it can be removed from the buffer and added to the DAG\footnote{To prevent unbounded growth of a buffer, we clear it at the end of each era as defined in Subsection \ref{sec:eras}}.

Perhaps the most important property of the Limited Naivety Criterion is that it allows to show that not too many equivocating votes are contained in any message downset.
More specifically, the following theorem states that we can bound the number of units below an honest unit.

\begin{theorem}\label{thm:limit_spam}
Assume that $f<\frac{n}{3}$. If $u$ is an $N$-th unit created by an honest validator, $u$ is above at most $O(n N(1 + f_{\mathrm{equiv}}))$ other units, where $f_{\mathrm{equiv}}=|E(D(u))|$.
\end{theorem}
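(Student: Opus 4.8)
The plan is to bound, for each validator $W$, the number of units by $W$ that lie in $D(u)$, and then sum over all $n$ validators. I would split into two cases depending on whether $W$ is an equivocator below $u$ or not. For a non-equivocating $W$, its units below $u$ form a chain, so the relevant count is just the length of the longest such chain; since honest validators follow the Unit Creation Schedule (at most two units per round) and $u$ is the $N$-th unit of an honest validator, each validator's chain visible to $u$ has length $O(N)$, giving $O(nN)$ units total from non-equivocators (modulo Byzantine-but-non-equivocating validators, which I would handle by noting that vertical spam is filtered out by the reception strategy, so even a Byzantine non-equivocator contributes only $O(N)$ units to an honest downset). The real work is in bounding the contribution of the $f_{\mathrm{equiv}}$ equivocators.

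For an equivocator $W$, the key leverage is the Limited Naivety Criterion together with the endorsement mechanism. First I would observe that $u$, being honest, satisfies LNC, so there is at most one ``naive branch point'': more precisely, among the units of $W$ that $u$ is above, $u$ can naively cite units from at most one equivocating pair, hence (chasing the definition) the set of $W$-units below $u$ that are \emph{not} below some endorsed unit forms essentially a single chain, of length $O(N)$. Every other $W$-unit below $u$ must lie below some endorsed unit $w$ with $w \le u$. Now I would count endorsed units: an endorsed unit has $>n/2$ endorsements, and each honest validator endorses at most one unit per sender before detecting an equivocation and at most $2$ more in the brief window around detection, so for each equivocator $W$ there are at most $3$ simultaneously-endorsed incomparable $W$-units (this is exactly the fact flagged in the text), and more to the point the total number of endorsed units below $u$ contributed per sender is $O(1)$ per equivocator plus $O(N)$ per honest sender. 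Thus the ``endorsed backbone'' below $u$ consists of $O(n N)$ units, and hanging off each endorsed unit $w$, the units of a fixed equivocator $W$ that are below $w$ but not below any \emph{deeper} endorsed unit again form (by LNC applied to $w$'s creator if honest, or by a direct combinatorial argument) an $O(N)$-length chain. Multiplying: $O(N)$ endorsed ancestors along which $W$ can ``branch'', times $O(N)$ units per branch, times $f_{\mathrm{equiv}}$ equivocators, gives $O(f_{\mathrm{equiv}} N^2)$ — but I expect the right bookkeeping collapses one factor of $N$, because the endorsed units themselves are totally ordered along each honest chain, yielding the claimed $O(nN(1+f_{\mathrm{equiv}}))$.

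Concretely the steps, in order, would be: (1) reduce to counting $W$-units in $D(u)$ summed over $W$; (2) handle non-equivocators via chain-length $= O(N)$ using the schedule and vertical-spam filtering; (3) for an equivocator $W$, use LNC on $u$ to show the $W$-units below $u$ and not dominated by any endorsed unit form a single $O(N)$-chain; (4) bound the number of endorsed units below $u$ by $O(nN)$ using the $>n/2$ endorsement threshold and the per-sender $O(1)$ bound on endorsed incomparable units; (5) for each endorsed unit $w\le u$, recursively bound the $W$-units below $w$ that are ``new at level $w$'' by $O(N)$ via LNC applied at $w$; (6) assemble the product and verify the telescoping that removes the spurious extra factor of $N$.

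\textbf{Main obstacle.} The crux is step (5)–(6): carefully defining the recursion on endorsed units so that each equivocating unit is charged exactly once, and showing the charging is consistent with LNC in a way that does not blow up to $N^2$. The subtlety is that LNC is a property of honest units only, and endorsed units need not be honest, so I would need the auxiliary fact (provable from the endorsement semantics: an honest validator never endorses after seeing an equivocation) that if $w$ is endorsed then ``most'' of $w$'s endorsers had equivocation-free views of $S(w)$, which constrains how $w$'s downset can branch below it. Getting this invariant precise, and reconciling it with the subjectivity of endorsement and naivety across different validators' views, is where the real care is required; the arithmetic at the end is routine.
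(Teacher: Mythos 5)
Your decomposition into non-equivocators (a single chain of length $O(N)$ each, hence $O(nN)$ total) and equivocators is the right start, and your step (3) — that the $W$-units naively cited below $u$ form a single chain by LNC applied to $u$ — is correct. But the core of the theorem, namely that each equivocator contributes only $O(n)$ chains (hence $O(nN)$ units) to $D(u)$, is exactly the part your proposal does not establish. Your recursion over endorsed units charges $W$-units to endorsed ancestors, and as you yourself compute this yields $O(f_{\mathrm{equiv}} N^2)$ with only a hope that ``the right bookkeeping collapses one factor of $N$''; no mechanism for that collapse is given, and none is apparent from the charging scheme as described, since nothing in it prevents a fixed equivocator from hanging a fresh $O(N)$-chain under each of $\Theta(N)$ endorsed units along a single honest chain. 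The paper closes this gap with a different and non-recursive idea: it bounds the \emph{width} of the poset of $W$-units in $D(u)$. Given any antichain $C$ of $W$-units below $u$, at most one element is naively cited by a unit of $S(u)$ (LNC for $u$); for every other $v\in C$ one picks a \emph{minimal} endorsed unit $e_v$ with $v\le e_v\le u$. Minimality forces $v<_n e_v$, so LNC applied to $e_v$ makes the map $v\mapsto e_v$ injective on $C$, and LNC again forces $e_{v_1},e_{v_2}$ with a common sender to be incomparable; combined with the lemma that at most $3$ incomparable endorsed units per sender can coexist (when $f<n/3$), this gives $|C|\le 3f+(n-f)+1=O(n)$. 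Dilworth's theorem then decomposes the $W$-units in $D(u)$ into $O(n)$ chains, each of length $O(N)$ because the honest creator of $u$ does not accept units from the future, and summing over equivocators and honest senders gives the claimed $O(nN(1+f_{\mathrm{equiv}}))$.

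Two further points. First, the obstacle you flag — that ``LNC is a property of honest units only'' — is not actually an obstacle in the paper's framework: LNC is a \emph{validity criterion enforced by the receiving honest validator}, so every unit in an honest validator's state $\sigma$, including units created by Byzantine validators and the endorsed units $e_v$, satisfies LNC in that validator's view; no auxiliary fact about the views of $w$'s endorsers is needed. Second, your step (4) (bounding the total number of endorsed units below $u$ by $O(nN)$) is not needed once the antichain/Dilworth argument is in place, and relying on it is precisely what pushes your accounting toward the extra factor of $N$. So the missing ingredient is the antichain-width bound via minimal endorsed witnesses plus Dilworth, not a refinement of your recursion.
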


\noindent 
The following subsection is devoted to proving Theorem~\ref{thm:limit_spam}.

\subsubsection{Bounding the Number of Equivocations}
We start with a simple lemma bounding the number of independent equivocations that can be endorsed at the same time.
\begin{lemma}\label{lem:limit_endors}
Suppose that $f<\frac{n}{3}$ and let $\sigma$ be the state of any honest validator during the protocol execution. %
Suppose that $C\subseteq \sigma $ is a set of pairwise incomparable units created by some validator $W$.
If all units in $C$ are endorsed, then $|C|\leq 3$. 
\end{lemma}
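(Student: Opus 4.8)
The plan is to use a counting argument based on the defining property of endorsement: a unit $u$ is endorsed only if more than $n/2$ distinct validators sent $\eendorse(W_i, u)$, and an honest validator never endorses two incomparable units by the same creator. First I would suppose for contradiction that $C = \{c_1, c_2, c_3, c_4\}$ (or more) is a set of pairwise incomparable units by $W$, all endorsed in $\sigma$. For each $c_j$, let $T_j \subseteq \cV$ be the set of validators whose endorsements of $c_j$ appear in $\sigma$; by the endorsement condition $|T_j| > n/2$, i.e. $|T_j| \geq \lfloor n/2 \rfloor + 1$. Now restrict to honest validators: let $H$ denote the set of honest validators, $|H| \geq n - f > 2n/3$. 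Since the endorsements from the at most $f$ Byzantine validators contribute at most $f$ to each $T_j$, each $c_j$ is endorsed by at least $|T_j| - f > n/2 - f$ honest validators; call this set $T_j^H \subseteq H$.

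The key combinatorial fact is that the sets $T_1^H, T_2^H, T_3^H, T_4^H$ are pairwise disjoint: an honest validator $V$ that endorsed $c_i$ and also endorsed $c_j$ with $i \neq j$ would have endorsed two incomparable units by the same creator $W$, which (as noted in the text immediately before the lemma) an honest validator never does. Hence $\sum_{j} |T_j^H| \leq |H| \leq n$. On the other hand, each $|T_j^H| > n/2 - f$. Combining, if there were $m$ incomparable endorsed units then $m(n/2 - f) < n$, so $m < \frac{n}{n/2 - f} = \frac{2n}{n - 2f}$. With $f < n/3$ we have $n - 2f > n/3$, hence $m < \frac{2n}{n/3} = 6$, which only gives $m \leq 5$, not $m \leq 3$ — so a crude bound is not tight enough and the argument needs to be sharpened.

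To get $m \leq 3$ I would tighten the disjointness accounting. The honest validators number at least $n - f$, and the $m$ sets $T_j^H$ are pairwise disjoint subsets of $H$, so $\sum_j |T_j^H| \leq n - f$ rather than $\leq n$. Meanwhile each $|T_j| \geq \lfloor n/2\rfloor + 1 \geq (n+1)/2$, and since a Byzantine validator can be in every $T_j$ we lose at most $f$ per set, so $|T_j^H| \geq (n+1)/2 - f$. Then $m\big((n+1)/2 - f\big) \leq n - f$. Using $f \leq (n-1)/3$ (from $f < n/3$ with integrality, assuming $n$ known integer), one checks $(n+1)/2 - f \geq (n+1)/2 - (n-1)/3 = (n+5)/6$ and $n - f \leq n - 0 $, but more usefully $n-f \leq n-(n-\text{something})$; plugging the extreme $f=(n-1)/3$ gives $m \cdot \frac{n+5}{6} \leq \frac{2n+1}{3} = \frac{4n+2}{6}$, so $m \leq \frac{4n+2}{n+5} < 4$ for all $n \geq 1$, hence $m \leq 3$. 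I would present this as: assume $|C| \geq 4$, derive $4\big(\lceil (n+1)/2\rceil - f\big) \leq n - f$, i.e. $4\lceil(n+1)/2\rceil \leq n + 3f < n + 3\cdot\frac{n}{3} = 2n$, contradicting $4\lceil(n+1)/2\rceil \geq 4\cdot\frac{n}{2} = 2n$. This contradiction establishes $|C| \leq 3$.

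The main obstacle is exactly this constant-chasing: the naive bound gives $5$, and squeezing it to $3$ requires carefully exploiting both that the honest-endorser sets are disjoint (not just inside $H$ but that they exhaust no more than $|H| = n - f$ validators) and that each endorsement set has size strictly greater than $n/2$ (the strict inequality and integrality matter). I would be careful about whether $n$ is assumed even or odd and state the bound $|T_j| \geq \lfloor n/2 \rfloor + 1$ uniformly; the inequality $4(\lfloor n/2\rfloor + 1 - f) \leq n - f$ should be checked to fail for all $n$ when $f < n/3$, which it does since it rearranges to $4\lfloor n/2\rfloor + 4 \leq n + 3f < 2n$, impossible as $4\lfloor n/2 \rfloor \geq 2n - 2$. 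A secondary subtlety worth a sentence: the lemma is stated relative to a fixed honest state $\sigma$, and "endorsed" is the subjective notion from that state, so all endorsement messages counted genuinely lie in $\sigma$ and the honesty of their senders is well-defined; I would remark that Byzantine validators contribute at most $f$ total endorsements per unit and honest ones obey the no-double-endorse rule, which is all the proof uses.
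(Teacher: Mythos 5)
Your proof is correct and follows essentially the same counting argument as the paper: the honest-endorser sets of the pairwise incomparable units are pairwise disjoint subsets of the $n-f$ honest validators, each of size greater than $n/2-f$, which is incompatible with $|C|\geq 4$ when $f<\frac{n}{3}$. The integrality/ceiling refinement in your final step is unnecessary---once you bound the union by $n-f$ (rather than $n$), the inequality $|C|\left(n/2-f\right) < n-f$ together with $f<\frac{n}{3}$ already gives $|C|<4$, which is exactly the paper's one-line finish.
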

\begin{proof}
For each unit $u \in C$ denote by $E_u \subseteq \cV$ the set of honest validators who endorsed $u$.
We have $|E_u|+f >n/2$, as each such $u$ is endorsed by assumption.

Since honest validators never endorse two incomparable units from a single validator, it follows that the sets $E_u$ are disjoint, hence
$$n-f \geq \sum_{u\in C} |E_u| > |C|(n/2 - f),$$
and since $f<\frac{n}{3}$, it follows that $|C|<4$.
\end{proof}
The proof of Theorem~\ref{thm:limit_spam} relies on the following technical lemma that bounds the number of different equivocations that a dishonest validator can create, so that all of them are included in an honest validator's state.
\begin{lemma}\label{chainBoundLemma}
Assume that $f< \frac{n}{3}$.
Let $\sigma$ be the protocol state of an honest validator (thus all units in $\sigma$ satisfy the Limited Naivety Criterion). Then for any unit $u\in \sigma$, and for any validator $V$, the set of units created by $V$ in $D(u)$ is contained in the union of at most $O(n)$ chains of units.
\end{lemma}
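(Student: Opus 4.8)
The plan is to fix an honest state $\sigma$, a unit $u \in \sigma$, and a validator $V$, and to partition the units created by $V$ that lie in $D(u)$ into chains, showing that $O(n)$ chains suffice. If $V$ is honest in $\sigma$ (i.e.\ $V \notin E(D(u))$) then $V$'s units in $D(u)$ form a single chain and there is nothing to prove, so I would assume $V$ has equivocated below $u$. The key object to track is, for each unit $w$ created by $V$, the set of \emph{endorsed} units by $V$ that sit below $w$; call this $\mathrm{Tip}(w)$ — more precisely the maximal endorsed units by $V$ that are $\leq w$. The LNC, applied through $u$, is what controls how these tips can branch: if $u_1 >_n v_1$ and $u_2 >_n v_2$ for an equivocation $(v_1,v_2)$ by $V$, and $u_1, u_2$ are both $\leq$ some honest unit, that honest unit is incorrect. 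So along any chain of $V$'s units leading up to $u$, once the chain ``passes'' an equivocating pair of $V$ without an endorsed unit separating it from that pair on both sides, correctness fails.

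The main structural step I would carry out is: group $V$'s units in $D(u)$ according to which maximal endorsed units of $V$ they lie above. Formally, for an endorsed unit $e$ by $V$, let $\mathcal{C}_e$ be the set of $V$'s units $w \leq u$ with $e$ being the (or a) maximal endorsed unit of $V$ below $w$ that is relevant to $w$'s naive citations; plus one extra class $\mathcal{C}_\bot$ for units below the first endorsed unit. The claim is (i) each such class is itself a union of $O(1)$ chains, and (ii) the number of distinct endorsed units of $V$ that can appear as such a ``base'' below a single honest $u$ is $O(n)$. For (ii) I would argue that endorsed units of $V$ below $u$ that are pairwise incomparable number at most $3$ by Lemma~\ref{lem:limit_endors} (endorsement is monotone, so their endorsement sets persist in $\sigma$), hence the endorsed units of $V$ in $D(u)$ form at most $3$ chains; but each honest validator $W$ can have contributed at most one chain of $V$'s units to $D(u)$ via direct naive citations (this is exactly the ``only one attacker'' reasoning made robust by endorsements), and there are $\leq n$ such $W$ — this is where the $O(n)$ comes from. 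For (i), within a single ``base'' class, LNC forbids two units of $V$ that naively cite the two sides of a fresh equivocation and then get merged under a common $V$-unit on the way to $u$, which bounds the branching of that class by a constant.

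The hard part, and the step I expect to be the main obstacle, is making precise the bookkeeping in (i)–(ii): the notion of naive citation is \emph{subjective} and monotone, the honest-validator-contributes-one-chain argument needs the endorsement buffer mechanism to be invoked correctly (a unit only enters an honest DAG once it is endorsed or once its ``suspicious'' ancestors are), and one must ensure that counting chains of $V$'s units does not secretly double count across the different honest validators who forwarded them. I would handle this by induction on the height of units, processing $V$'s units in $D(u)$ in topological order and maintaining the invariant that the units seen so far are covered by at most $3 + (\text{number of honest validators whose naive citations have been used})$ chains, with each new branch point chargeable either to a new honest endorser or to one of the $\leq 3$ maximal endorsed $V$-units. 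The LNC violation argument is what guarantees no fourth charge type arises. The arithmetic of the $O(n)$ constant and the exact constant in (i) are routine once this invariant is set up, so I would not belabor them.
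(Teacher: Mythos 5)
Your proposal does not follow the paper's route, and as written it has a genuine gap. The paper's proof is purely combinatorial: it bounds the size of any \emph{antichain} $C$ of $V$-units in $D(u)$ by $3f+(n-f)+1$ and then invokes Dilworth's theorem to get the chain cover. The antichain bound is obtained by noting (via LNC applied to $u$) that at most one element of $C$ can be naively cited by $S(u)$, and mapping every other $v\in C$ to a \emph{minimal} endorsed unit $e_v$ with $v\leq e_v\leq u$; minimality plus LNC makes the $e_v$ pairwise distinct and forces $e_v$'s with a common creator to be pairwise incomparable, so by Lemma~\ref{lem:limit_endors} each equivocating creator contributes at most $3$ of them and each non-equivocating creator at most $1$. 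Crucially, these covering endorsed units are created by \emph{arbitrary} validators, since the definition of $>_n$ quantifies over all endorsed units, not just those by $V$.

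This is exactly where your sketch goes wrong. Your decomposition indexes classes by maximal endorsed units \emph{created by $V$}, but those are not the objects that control naive citation: a single such ``base'' class can contain $\Theta(n)$ pairwise incomparable $V$-units, each shielded from $u$ by an endorsed unit created by a \emph{different} validator, so your claim (i) that each class is a union of $O(1)$ chains is unjustified and appears false as stated. Your charging scheme also conflates two distinct populations: the validators who \emph{issue} endorsements (each of whom endorses many units, so ``a new honest endorser'' is not a bounded resource in the relevant sense) versus the \emph{creators} of the endorsed units that block naive citation; and it omits the contribution of dishonest creators, who can each supply up to $3$ pairwise incomparable endorsed units (this is precisely the $3f$ term in the paper's bound, obtained from Lemma~\ref{lem:limit_endors} together with an LNC argument showing comparable same-creator covers are impossible). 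Finally, your topological-order induction with the invariant ``$3+(\text{honest endorsers used})$ chains'' is only a sketch of bookkeeping, whereas the missing idea that makes everything routine is the antichain-plus-Dilworth reduction together with the minimal-endorsed-cover injection; none of the buffer or reception-strategy machinery you worry about is needed.
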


\begin{proof}
Let $C$ be any set of pairwise incomparable units, created by single validator $V$ in the downset of $u$.
We will show that $|C|\leq 3f+(n-f)+1$. Once we have that, the lemma then follows from Dilworth's Theorem.

First of all, there might be at most one unit $u_0\in C$ such that some unit by $S(u)$ naively cites $u_0$ (this follows from $u$ satisfying LNC).
For clarity we remove such an $u_0$ from $C$ and later account for it by adding $+1$ to the obtained upper bound.
At this point, for each $v\in C$ there exists an endorsed unit $e$ such that $v\leq e \leq u$.
For each $v\in C$, we pick $e_v$ to be any minimal unit in the set of all endorsed units $e$ such that $v \leq e \leq u$.

We claim that:
\begin{itemize}
    \item The units $e_v$ for $v\in C$ are pairwise distinct.
    \item $|\{e_v : v\in C\}| \leq 3f+(n-f)$
\end{itemize}
It is easy to see that the lemma follows after establishing the claims.

For the first claim, if we had that $e_{v_1} = e_{v_2}$ for $v_1, v_2\in C$ with $v_1\neq v_2$, then $e_{v_1}$ would naively cite two equivocations (this follows from minimality of $e_{v_1}$), thus would violate LNC.

For the second claim, we first note that if $S(e_{v_1})=S(e_{v_2})$ for some $v_1, v_2 \in C$, then $e_{v_1}, e_{v_2}$ must be incomparable.
Indeed if we had, without loss of generality, that $e_{v_2} \leq e_{v_1}$ then $e_{v_1}$ would violate LNC, as $v_2 <_n e_{v_2}$ and $v_1<_n e_{v_1}$.
This means that, by Lemma~\ref{lem:limit_endors} each equivocator may contribute at most $3$ units to $\{e_v: v\in C\}$ and the remaining validators may contribute at most $1$.
In total, this gives $|\{e_v: v\in C\}|\leq 3f+(n-f)$.
\end{proof}

Lemma \ref{chainBoundLemma} provides a bound on the number of messages that can be created by means of equivocation spam. This is enough to prove Theorem~\ref{thm:limit_spam}.

\begin{proof}[Proof of Theorem~\ref{thm:limit_spam}]
Each chain of units created by a single validator in $D(u)$ might have length at most $O(N)$ since the creator of $u$ is honest and does not add units from the future to its DAG.
By Lemma~\ref{chainBoundLemma}), each equivocator in $D(u)$ might contribute at most $O(n)$ chains, $O(N)$ units each.
Thus there will be at most $O(nNf_{\mathrm{equiv}})$ units by equivocators in $D(u)$. 
The remaining honest validators produce at most $O(nN)$ units in total, during these $N$ rounds.
\end{proof}

\subsubsection{Liveness After Adding Spam Prevention Measures}\label{sec:dynamicrounds}

While the application of Limited Naivety Criterion ensures that an attacker cannot force the honest nodes to process too many units, it is no longer clear that it can be made to work with our original liveness strategy as that required validators to create units above all units they are aware of, which may well violate the Limited Naivety Criterion. We thus need to slightly modify our liveness strategy.

First of all we increase the length of a round from $R=3\Delta$ to $R=6\Delta$, the reason for that is that for a unit sent by an honest validator at time $0$, $R/3$ or $2R/3$ we want not only the unit but also its endorsements to reach each other honest validator by time $R/3$, $2R/3$ or $R$ respectively.
Further, whenever a validator is in the {\it suspicious} mode, we instruct it to directly cite only endorsed units (validator's own most recent unit is exempt from this rule).

For the proof of the main liveness result -- Theorem~\ref{thm:liveness} -- to go through, we need the following requirement to be satisfied: 

\noindent {\bf Rapid Endorsement Spread:} ``Whenever a unit $u$ is created by an honest validator $V$ after $\gst$, after $R/3$ time, each unit in $D(u)$ that was endorsed in $V$'s local view at the time of creating $u$ is also endorsed in the local view of every honest validator.''

To guarantee the above condition the simplest (but perhaps not the most efficient) way would be to instruct the honest validators to broadcast a set of $>n/2$ endorsements of a particular unit, after it becomes endorsed in their local view.
Various more efficient strategies here are possible, we discuss some more practical approaches to endorsements in Section~\ref{sec:practical}.

\section{Practical Considerations}\label{sec:practical}
\subsection{Dynamic Round Lengths}
When defining the base version of our protocol we divided the time into fixed length rounds.
The length of a round was picked so as to make sure that in each such round, even assuming worst case delays, a few communication ``round trips'' between honest validators would be possible.
On the theoretical side, the worst case message delay $\Delta$ was available from the partial synchrony assumption in the {\it known $\Delta$ flavour}, as initially defined in~\cite{DLS88}.
In practice, one can estimate an appropriate $\Delta$ using experiments, yet there are a few dangers to be aware of when setting a constant $\Delta$:
\begin{itemize}
    \item If we set the $\Delta$ too optimistic (i.e., too low) then liveness might be violated, as some messages between (high ping) pairs of validators ping can arrive too late.
    \item If we set the $\Delta$ too pessimistic, then the protocol might become quite slow, as during significant portion of the round, the validators would stay idle.
\end{itemize}

This motivates the use of dynamic round lengths so that the protocol by itself adapts to the current network conditions.
At a theoretical layer this ``approach'' corresponds to the {\it unknown $\Delta$ flavour} in the original definition of partial synchrony by~\cite{DLS88}.
The theoretical model states that there is some fixed, worst case message delay $\Delta$, yet it is not available at the start of the protocol execution.
Even though these two models of partial synchrony are equivalent via almost black-box reductions, we present here a custom version of our protocol for the unknown $\Delta$ model that is optimized for practical implementations.

\noindent {\bf Rules for changing round lengths.}
We measure time in milliseconds since the epoch\footnote{Unix time: number of seconds since the beginning of 1970, UTC, but millisecond resolution is more suitable for this purpose}.
An integer timestamp is called a \emph{tick}.
As before, without loss of generality, we assume that the participants have synchronized clocks.
We pseudorandomly assign one of the validators $\mathcal{L}(i) \in \mathcal{V}$ to each tick $i$ as the tick's \emph{leader}.
One needs to make sure that this schedule is sufficiently random, so that honest validators appear in this schedule frequently.

We assume that units have timestamps, i.e., for a unit $u$, $T(u)$ is the time when unit $u$ was sent.

Each validator $v$ maintains a private parameter $n_v(i) \in \mathbb{N}$ (called the {\it round exponent}) that is updated periodically. Formally, the map $n_v: \mathbb{N} \rightarrow \mathbb{N}$ assigns to each tick number the round exponent that the validator $v\in \cV$ should use at this particular tick.
Below we give a strategy on how to select $n_v$, but we always assume that $n_v (i) = n_v (i - 1)$ unless $i$ is a multiple of both $2^{n_v(i)}$ and $2^{n_v(i-1)}$. In other words, we assume that the parameters are kept constant for time windows of $2^{n_v(i)}$ ticks, from $j$ to $j + 2^{n_v(i)} - 1$, where $j \leq i$ is maximal such that it divides $2^{n_v(i)}$. 

For a validator $v\in \cV$ a new round starts whenever the current tick $i$ is divisible by $2^{n_v(i)}$ and this round's length is $R:=2^{n_v(i)}$.
Further, for every tick $i$, the value $i \mbox{ mod }2^{n_v(i)}$ determines the time that has passed since the round started.
For this round, this validator follows the unit reception and creation strategy as described in Section~\ref{sec:unit_creation}.
The leader for this round is $\cL(i_0)$ where $i_0$ is the tick at which the round started.

We emphasize that (as explained below) $n_v$, and thus also round lengths, are changed for each validator $v\in \cV$ separately, based solely on their local view.
For this reason, the rounds might (and surely will) fall out of sync from time to time, even between honest validators.
However, the round length adjustment strategy is designed in such a way so that in practice, this happens rarely and the round lengths will differ only by a factor of $2$.

\noindent {\bf Strategy for changing $n_v$.} Our strategy is parametrized by the following numbers:
\begin{itemize}
    \item two natural numbers $\nmin<\nmax$ that determine the minimum and the maximum, respectively, value of $n_v$ that we allow for validators,
    \item a confidence threshold $t_0$ that is used to measure progress in finalization,
    \item numbers $0<C_{\mathrm{fail}}< C_{\mathrm{succ}} <C$ and $D$, that specify the conditions based on which the round exponent changes.
\end{itemize}

\desc{

\begin{center}
    {\bf Round Exponent Maintenance Strategy for $V\in \cV$} 
\end{center}
\begin{enumerate}
\item At tick $i=0$, initialize $n_v(j)= \nmin$ for all $j$. Initialize $\cntsucc=0$.
\item At tick $i>0$,  let $m:=n_v(i)$. Whenever $i$ is divisible by $2^{m+1}$:
\begin{enumerate}
        \item Let $b_{\mathrm{fin}}$ denote the number of blocks finalized with confidence $t_0$ during the last $C$ rounds,
        \item if $b_{\mathrm{fin}}\leq C_{\mathrm{fail}}$ then set $n_v(j)=\min(m+1, \nmax)$, for all $j\geq i$,
        \item if $b_{\mathrm{fin}}\geq C_{\mathrm{succ}}$ then increment $\cntsucc=\cntsucc+1$, otherwise set $\cntsucc=0$.
        \item If $i$ is divisible by $C2^{m+1}$ and $\cntsucc\geq D$, set $n_v(j)=\max(m-1, \nmin)$, for all $j\geq i$. Set $\cntsucc=0$.
\end{enumerate}
\end{enumerate}
}

In practice $t_0$ should probably be much lower than the confidence threshold used for ``safety'', for instance $t_0\approx 0.01$.
Even if in a single round, the block proposed by the leader gets finalized with the lower confidence $t_0$, if there are enough honest validators this confidence will increase in the future, as the summit grows in height. The constants $C_{\mathrm{fail}}, C_{\mathrm{succ}}, C, D $ are chosen so as to make sure that the ``correct'' round exponent is learned by the validators based on the ``finality rate'' feedback.
Example values of these constants could be $(C_{\mathrm{fail}}, C_{\mathrm{succ}},C) = (10, 32, 40)$ and $D=3$.

\subsection{Eras}\label{sec:eras}
Since the Highway protocol is meant for creating and maintaining a blockchain, once run (initialized) it is supposed to run forever, without stops.
Consequently, the validators are forced to store the whole DAG, even the units that were created at the very beginning of the protocol execution.
Erasing ``old'' parts of the DAG is not safe, as (likely dishonest) validators might directly cite in their recent units some units that are very old.

When talking about the storage issue, it is important to emphasize that clearly, the blockchain must be stored anyway, and this is a cost that is impossible to avoid.
However, intuitively, nothing should stop us from erasing old ``metadata'', i.e., units that have been used to finalize certain blocks, and thus are not useful anymore.
This issue becomes all the more serious after realizing that for large committees of validators, this ``metadata'' might likely require more space than the blockchain itself.
What is even worse is that such a DAG must be kept in RAM and not on disk.
The reason for that is that one must build efficient data structures over the DAG for the sake of detecting finality and checking correctness of units (in particular the LNC).
Placing such data structures on disk instead, would cause a dramatic slowdown. 

A practical solution for this issue is to divide the protocol exucution into {\it eras}.
In each era, $K=1000$ new blocks are added to the blockchain, and importantly a new instance of Highway is run in every era.
This means that, for instance, in era $5$, we consider the block $B_{4999}$ at height $4999$, finalized in era $4$, as the genesis unit.

If we use eras in such a way, then the validators only need to store blocks that were finalized in the previous eras and the DAG for the current era.
This also helps a lot with equivocation spam attacks, as equivocators caught in a particular era can be banned from the very beginning of the subsequent eras.
The honest nodes start a new era in the {\it relaxed} mode and thus do not need to send endorsements because of past equivocations anymore.

One additional benefit of using eras is that it allows to change the validator set according to some prespecified rules, and hence move the protocol towards the permissionless model. 

\subsection{Sending Less Endorsements}
Recall that the original endorsement strategy that we introduced in Section~\ref{sec:endorse} was very simple: after seeing an equivocation, endorse every unit by non-equivocating validators.
While it allows for quite simple arguments that liveness is preserved, it also introduces a non-negligible overhead in case when equivocations are detected in a particular era.

Below we present a refined strategy that still guarantees that no honest validator ever gets stuck because of lack of endorsements of its own, honest units.
At the same time, this strategy allows to send much fewer endorsements.

\desc{

\begin{center}
    {\bf Refined Endorsement Strategy for $V\in \cV$ } 
\end{center}
\begin{enumerate}
\item Initialize the set $\equivs:=\emptyset$.
\item Whenever $V$ sees an equivocation by $W$, add $W$ to $\equivs:=\emptyset$.
\item For every unit $u$ that $V$ is aware of, such that the following conditions are satisfied, $V$ broadcasts   $\eendorse(V,u)$:
\begin{enumerate}
    \item The creator of $u$, $S(u)$ does not belong to $\equivs$.
    \item There exists a validator $W\in \equivs$ such that
    \begin{itemize}
        \item $W\notin E(u)$,
        \item There is a unit $w$, created by $W$ such that $w\in D(u)$ but the unit $u'$ created by $S(u)$, just before $u$, does not cite $w$.
    \end{itemize}
\end{enumerate}
\end{enumerate}
}

In a practical scenario, after an equivocation by a validator $W$ is detected at round $r$, then after a few rounds no honest validator will cite any new units by $W$ anymore.
The above strategy requires, in such a case, sending endorsements only for units created in these few rounds that happen after round $r$.
The honest units that are created later, never cite any ``new'' units by $W$ and thus do not require endorsements.

\subsection{Weighted Consensus}

So far it has been assumed that the opinion of every validator is equally important in the process of achieving consensus.
In this subsection, we describe the modifications allowing the Highway protocol to be run in a scenario where each validator has an associated integer weight, corresponding to its voting power -- a useful version for example when constructing a Proof of Stake blockchain. 

Let the sum of all validator weights equals $N$, and the function $\ww(V)$ denotes the weight associated with validator $V$.
In such a version, the function $\mathrm{total}$ being the basis for the GHOST rule has to be replaced with the one counting the \emph{total weight} of validators supporting a given option, instead of their number. 
Similarly, the \textbf{density} condition in the definition of the summit needs to be replaced with the version bounding total weight of the set of validators  $S\big(\bar{D}(u) \cap C'_i)\big)$, instead of their number. 
After these changes are introduced, Theorems \ref{thm:finality} and \ref{thm:liveness} remain true without any changes, as they do not deal with number of validators, and the proofs do not take advantage of the equal validator weight. 

Note that while the finality in such a modified scenario is purely a function of validator weights, the communication complexity remain dependent on the total \emph{number} of validators -- every validator needs to download the units of every other validator, no matter how small its weight is.
For this reason, when desigining the blockchain system, the number of validators should be considered as the main factor influencing latency and throughput, not their weights and dependencies between them.

\bibliographystyle{alpha}
\bibliography{highway}

\appendix

\end{document}